\newcommand{\defeq}{\stackrel{\textrm{def}}{=}}
\newcommand{\ite}{\texttt{ite}}
\let\vec\boldsymbol
\newcommand{\domain}{\textit{\textbf{G}}}
\newcommand{\mso}{\textsf{MSO}}
\newcommand{\smso}{\textsf{sMSO}}
\newcommand{\fo}{\textsf{FO}}
\newcommand{\sampling}{\textrm{sampling}}
\newcommand{\filtering}{\textrm{filtering}}
\newcommand{\summ}{\textrm{sum}}
\newcommand{\compare}{\textsf{compare}}
\newcommand{\update}{\textsf{update}}
\newcommand{\outputs}{\textsf{output}}
\newcommand{\SRT}{\textsf{SRT}}
\newcommand{\DSRT}{\textsf{DSRT}}
\newcommand{\addfree}{\textsf{SRT}$_{\textsf{A}}$}
\newcommand{\daddfree}{\textsf{DSRT}$_{\textsf{A}}$}
\newcommand{\uninit}{\textsf{SRT}$_{\textsf{U}}$}
\newcommand{\dense}{\textsf{SRT}$_{\textsf{D}}$}
\newcommand{\AU}{\textsf{SRT}$_{\textsf{AU}}$}
\newcommand{\AD}{\textsf{SRT}$_{\textsf{AD}}$}
\newcommand{\AUD}{\textsf{SRT}$_{\textsf{AUD}}$}
\newcommand{\cg}{\mathcal{CG}}
\newcommand{\oldv}{\textsf{old}}
\newcommand{\newv}{\textsf{new}}
\newcommand{\addv}{\textsf{add}}
\newcommand{\pos}{\textsf{pos}}
\newcommand{\negz}{\textsf{neg}}
\newcommand{\hide}[1]{}
\newcommand{\etal}{\textit{et al.}\@\xspace}
\numberwithin{equation}{section}
\definecolor{dkgreen}{rgb}{0,0.3,0}
\definecolor{gray}{rgb}{0.5,0.5,0.5}
\definecolor{mauve}{rgb}{0.58,0,0.82}
\definecolor{light-gray}{gray}{0.80}
\lstdefinelanguage{spmd}{
  morekeywords = {
       loc, bit, bool, true, false
     , implements, harness
     , null
     , assert, assume
     , else
     , find, fix, fold, for, forall, function
     , generator, gen
     , if, while, int, float, bool, string
     , loop, simple, cond, val
     , fork, join
     , nil, null, none, new, malloc
     , option, or
     , ref, return
     , spmdfork, nprocs, spmdtransfer
     , void
     , concrete, sym
     , requires, ensures
     , invariant, decreases
     , conj, exp
     , init, stmt},
  literate=
    {-}{--}1,
  morecomment=[l]{//}
}
\renewcommand{\scriptsize}{\fontsize{8.5}{9}\selectfont}
\newcommand{\conf}[1]{}
\newcommand{\Strand}{{\sf Strand}\xspace}
\newcommand{\semantics}[1]{\llbracket{}#1\rrbracket{}}
\begin{document}

\title{Streaming Transformations of Infinite Ordered-Data Words}         


\author{Xiaokang Qiu}
\orcid{nnnn-nnnn-nnnn-nnnn}             
\affiliation{
  \department{School of Electrical and Computer Engineering}              
  \institution{Purdue University}            
  \streetaddress{Street1 Address1}
  \city{West Lafayette}
  \state{IN}
  \postcode{47907}
  \country{USA}                    
}
\email{xkqiu@purdue.edu}          


\begin{abstract}
In this paper, we define streaming register transducer (\SRT), a one-way, letter-to-letter, transductional machine model for transformations of infinite data words whose data domain forms a linear group. Comparing with existing data word transducers, \SRT~ are able to perform two extra operations on the registers: a linear-order-based comparison and an additive update. We consider the transformations that can be defined by \SRT~ and several subclasses of \SRT. We investigate the expressiveness of these languages and several decision problems. Our main results include: 1) \SRT~ are closed under union and intersection, and add-free \SRT~ are also closed under composition; 2) \SRT-definable transformations can be defined in monadic second-order (\mso) logic, but are not comparable with first-order (\fo) definable transformations; 3) the functionality problem is decidable for add-free \SRT, the reactivity problem and inclusion problem are decidable for deterministic add-free \SRT, but none of these problems is decidable in general for \SRT.
\end{abstract}

\begin{CCSXML}
<ccs2012>
<concept>
<concept_id>10011007.10011006.10011008</concept_id>
<concept_desc>Software and its engineering~General programming languages</concept_desc>
<concept_significance>500</concept_significance>
</concept>
<concept>
<concept_id>10003456.10003457.10003521.10003525</concept_id>
<concept_desc>Social and professional topics~History of programming languages</concept_desc>
<concept_significance>300</concept_significance>
</concept>
</ccs2012>
\end{CCSXML}

\ccsdesc[500]{Software and its engineering~General programming languages}
\ccsdesc[300]{Social and professional topics~History of programming languages}

\keywords{infinite data words; transducers; streaming transformation}  

\maketitle

\section{Introduction}
\label{sec:intro}

Transformations of infinite strings describe the behavior of many computing systems, especially reactive systems. Several logic- and automata-based models~\cite{Gire1986,Varricchio1993,Beal2004,Alur2012} have been proposed to describe these transformations, with various expressiveness. Nonetheless, all these models only handle letters from a finite alphabet. Strings with a data value from an infinite domain for each position, usually called \emph{data words}, are needed in various application scenarios. For example, data words naturally describe paths in XML trees or linked-lists data structures. 

Transformations of data words pose a set of new challenges to the formalism as they need to support operations that compare and manipulate data values from an infinite domain. Classical finite-state automata for finite alphabet are extended to models that recognize finite data words by checking, at least the equality of data values. These models include register automata~\cite{ra94,Demri_2009}, data automata~\cite{bojanczyk,Bojanczyk2010} and pebble automata~\cite{pebble,Neven_2004}.
More recently, automata models are also proposed to process data words with a linearly-ordered data domain~\cite{Segoufin2011,Tan2012}.
Several decidable logics are also proposed to describe properties of data words and decision procedures are developed~\cite{havoc,csl,popl11}.

What remains less explored, however, is a natural and powerful machine model for implementations of transformations of data words. To be close to real-world systems, the model should preferably have several desirable features. 
We describe them below.

First, the model should be a one-way, letter-to-letter transducer, similar to a Mealy machine. This allows us to model streaming transformations of infinite data words. which have become increasingly common today. For example, in the Internet of Things (IoT): most things generate, process and transmit streaming data continuously. In other words, the system maintains only a fix amount of data values, and just sequentially visits each piece of {\bf non-persistent data} once.

Second, the model needs to be expressive enough to describe common operations over streamed data, e.g., sampling, filtering, aggregation, etc. Sampling calls for {\bf nondeterminism} of the model (a piece of data can be randomly sampled or not sampled); filtering calls for comparisons of data values in a {\bf linear order} (e.g., to describe, low-pass or high-pass filtering); aggregation calls for an {\bf additive operation} over the data domain (e.g., to compute the sum of a group of data values).

Thirdly, the model should be general and {\bf agnostic to the underlying data domain}. In other words, study results for this model can be applied to transformations over arbitrary data domain, e.g., integers, rationals, or reals.


While researchers have proposed several transductional models in recent years, they hardly meet the desirable features stated above. Streaming data-string transducers (SDST)~\cite{Alur2011} are deterministic transducers that transform finite data words only. SDST feature a set of data string variables, which allow the model to memoize unbounded number of data values, making the model not suitable for non-persistent data manipulation. Moreover, the transducer supports linear-order comparison of data values only. The models proposed in~\cite{Durand2016} are similar but do not even allow data comparison.
Similarly, the model for reactive system implementation proposed by Ehlers~\etal~\cite{Ehlers_2014} is too powerful as it allows unbounded memory through a queue. Meanwhile it is also too weak as it supports equality check only.
Register transducers~\cite{Khalimov_2018,Exibard2019} are studied as implementations of register automata as specification. Similarly, these models are deterministic and support equality check only.

To this end, we propose \emph{streaming register transducer}~(\SRT), a model we argue naturally describes streaming transformations of infinite ordered-data words, whose underlying data domain forms a {\bf linear group}. Examples of linear groups include integers, rationals or reals together with $\leq$ and $+$, and multi-dimensional planes of these numbers with lexicographic order and point-wise addition. \SRT~ accept a finite or infinite data word as input, performs one-way, letter-to-letter transformations and produces another finite or infinite data word as output. Similar to register transducer~\cite{Khalimov_2018,Exibard2019}, \SRT~ are equipped with finite states and a fixed number of registers, and the transitions are determined by the current state and the data comparison between registers and the current data value. However, \SRT~ support nondeterministic transitions and rich operations for linear group: linear-order-based data comparison and updates to registers by adding the current data value to them. We also investigate several subclasses of \SRT: the transitions are deterministic, the additive updates are disallowed, the registers are uninitialized, or the data domain is dense.

\begin{table*}
\begin{tabular}{|l|c|c|c|c|c|c|c|c|}
\hline
 & $\cup$-closed & $\cap$-closed & Comp-closed & \fo-definable & \mso-definable & Functionality & Reactivity \& Inclusion \\ \hline
 \SRT & yes & yes & no & no & yes & undecidable & undecidable \\ \hline
 \DSRT & no & yes & no & no & yes & trivial & undecidable \\ \hline
  \addfree & yes & yes & yes & no & yes & {\sf 2NEXPTIME} & open \\ \hline
 \daddfree & no & yes & yes & no & yes & trivial & {\sf 2NEXPTIME} \\ \hline
  \AD & yes & yes & yes & no & yes & {\sf NEXPTIME} & open \\ \hline  
  \AU & yes & yes & yes & no & yes & {\sf NEXPTIME} & open \\ \hline  
\end{tabular}
\caption{Summary of main results.}
\label{tbl:results}
\end{table*}

The main results of the paper are summarized in Table~\ref{tbl:results}.
First, \SRT~ are closed under union and intersection, and add-free. \SRT~ are also closed under composition.
Second, \SRT-definable transformations can be defined in monadic second-order (\mso) logic, but are not comparable with first-order (\fo) definable transformations. More precise logical characterization of \SRT~ is posed as an open problem.
Thirdly, the functionality problem is decidable for add-free \SRT, the reactivity problem and inclusion problem are decidable for deterministic add-free \SRT, but none of these problems is decidable in general for \SRT. The reactivity and inclusion problems for nondeterministic \SRT~ remain open.

\section{Preliminaries}

\subsection{Data Words}
\label{sec:dw}

\begin{definition}[Linear Group]
A \emph{linear group} is a triple $(D, \leq, +)$ where $D$ is infinite data domain, $(\leq)$ is a total order over $D$, and $+$ is a binary operation such that $(D, +)$ forms an additive group, i.e., $+$ is associative, has an identity $0$ and inverse operation $-$.
\end{definition}

The most common instances of linear group are $(\mathbb{Z}, \leq, +)$, $(\mathbb{Q}, \leq, +)$ and $(\mathbb{R}, \leq, +)$, where $\mathbb{Z}$, $\mathbb{Q}$ and $\mathbb{R}$ are the sets of integers, rationals and reals, respectively. These primary linear groups can also be combined to form a multi-dimensional plane. For example, $(\mathbb{R}^2, \leq, +)$ is a linear group, where $\mathbb{R}^2$ is the bidimensional Euclidean space of reals, where $\leq$ is the lexicographic order (i.e., $(a, b) \leq (c, d)$ if $a < c$, or $a=c$ and $b \leq d$), $+$ is the point-wise addition (i.e., $(a, b) + (c, d)$ is defined as $(a+c, b+d)$).

\begin{definition}[Density]
A linear order $(D, \leq)$ is \emph{dense} if for any two elements $a, b \in D$ such that $a < b$, there exists another $c \in D$ such that $a < c < b$.
\end{definition}

\begin{definition}[Discreteness]
A linear group $(D, \leq, +)$ is \emph{discrete} if there exists a least positive element.
\end{definition}

\begin{proposition}
\label{thm:dense-discrete}
A linear group is discrete if and only if it is not dense.
\end{proposition}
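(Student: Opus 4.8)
The plan is to prove the two implications directly, using the group structure to translate ``a gap in the order'' into ``a least positive element'' and back. Throughout I will rely on the order of a linear group being translation-invariant, i.e.\ $a \leq b$ iff $a + c \leq b + c$ for every $c$; this is the natural reading of the definition and holds for every instance the paper considers (it is exactly what makes the lexicographic order on $\mathbb{R}^2$ compatible with point-wise addition). I write $b - a$ for $b + (-a)$. Note first that positive elements exist at all: since $D$ is infinite it contains two distinct elements, and the difference of the larger and the smaller is positive.

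For the direction \textbf{discrete $\Rightarrow$ not dense}: suppose $\epsilon$ is a least positive element, and take the candidate pair $a = 0$, $b = \epsilon$. Then $0 < \epsilon$, and by minimality of $\epsilon$ there is no $c$ with $0 < c < \epsilon$, so density fails on this pair. This half uses only the meaning of ``least positive element'' and no translation invariance.

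For the direction \textbf{not dense $\Rightarrow$ discrete}: unfold the failure of density to obtain $a < b$ with no element strictly between them, and set $\delta \defeq b - a$. Adding $-a$ to $a < b$ gives $0 < \delta$, so $\delta$ is positive. I then claim $\delta$ is a least positive element: if some $\gamma$ satisfied $0 < \gamma < \delta$, adding $a$ to this chain would yield $a < a + \gamma < a + \delta = b$, contradicting that nothing lies strictly between $a$ and $b$. Hence $\delta$ witnesses discreteness. The one point requiring a little care is precisely this appeal to translation invariance (used both to get $0 < \delta$ and in the ``add $a$ back'' step); if one insists on the bare definition as stated, one should record order-compatibility of $+$ as a standing assumption first, after which the argument above goes through unchanged. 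No case split on the structure of the order is needed — the argument is uniform.
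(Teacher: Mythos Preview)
The paper states this proposition without proof, so there is nothing to compare against; your argument is the proof. Both directions are correct as written. Your observation about translation invariance is on point and worth keeping: the paper's Definition of a linear group requires only that $\leq$ be a total order and $(D,+)$ a group, with no compatibility axiom linking them, yet the ``not dense $\Rightarrow$ discrete'' direction genuinely needs $a \leq b \Leftrightarrow a+c \leq b+c$. Without it the proposition is false --- for instance, take $(\mathbb{Q},+)$ but re-order $\mathbb{Q}$ so that the elements greater than $0$ begin with a copy of $(0,1)\cap\mathbb{Q}$ (no minimum) followed by a block order-isomorphic to $\mathbb{Z}$ (a gap): this is not dense yet has no least positive element. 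So your caveat is not pedantry; it patches an omission in the definition that all the paper's examples happen to satisfy.
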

In other words, there is a dichotomy of dense and discrete linear groups. For example, $(\mathbb{Z}, \leq, +)$ is discrete; $(\mathbb{Q}, \leq, +)$ and $(\mathbb{R}, \leq, +)$ are dense. In the rest of the paper, we will interchangeably use term non-dense or discrete.


Let $\Sigma$ and $\Gamma$ be finite sets of labels, and let $(D, \leq)$ be a linear group. A finite data word (resp, $\omega$-data word) over an alphabet $\Sigma \times D$ is a finite sequence (resp. $\omega$-sequence) of letters from $\Sigma \times D$. We write $(\Sigma \times D)^*$ (resp. $(\Sigma \times D)^*$) for the set of finite data words (resp. $\omega$-data words), and $(\Sigma \times D)^{\infty}$ for $(\Sigma \times D)^* \cup (\Sigma \times D)^*$. 

For a data word $s$, we write $s[i]$ for the $i$-th letter of $s$, and $|s|$ for the length of $s$. And for any letter $a$, we represent the set of positions in $s$ with letter $a$ as \[ \langle s \rangle_a: \{p \in \mathbb{N} \mid s[p] = a \}\]
We also write $s[P]$ to represent the string contracted from $s$ based on the position set $P$, i.e., $s[P] \defeq \langle s[i] \rangle_{i \in P}$.

Given, two data words $s \in (\Sigma \times D)^{\infty}$ and $t \in (\Gamma \times D)^{\infty}$, if $|s| = |t|$, they can be combined to form a transformation instance:
\[ s \otimes t \defeq \langle (s[i], t[i]) \rangle_{i \geq 0}\]

\subsection{Streaming Transformation}

\begin{definition}[Streaming Transformation]
\label{def:transformation}
Let $\Sigma$ be an input label set, $\Sigma$ be an output label set, $D$ be a data domain. A \emph{$(\Sigma, \Gamma, D)$-streaming transformation} (or \emph{transformation} for short) 
is a language over the input-output pairs $(\Sigma \times D) \times (\Gamma \times D)$ such that if $V$ is a word in the language, so is every prefix of $V$. We call a word in a transformation a $(\Sigma, \Gamma, D)$-\emph{transformation instance}.
\end{definition}
Intuitively, streaming transformations are mappings from input to output that can be done in a letter-to-letter fashion.
Note that the definition assumes every input letter triggers an output letter. To allow the no-output behavior for some input letters, one can simply introduce $\#$ as a special output label and use $(\#, 0)$ as a vacuous output.
A transformation instance $V$ can be uniquely split $V$ into an input data word and an output data word, denoted as $in(V)$ and $out(V)$.

\begin{definition}[Transformation Composition]
Given a $(\Sigma, \Gamma, D)$-transformation $\mathcal{T}_1$ and another $(\Gamma, \Theta, D)$-transformation $\mathcal{T}_2$, the composition transformation $\mathcal{T}_1 \cdot \mathcal{T}_2$ is defined as a $(\Sigma, \Theta, D)$-transformation that contains exactly instances of the form $s_1 \otimes s_2$ that satisfies the following conditions:
\begin{itemize}
\item $s_1 \in (\Sigma \times D)^*$ and $s_2 \in (\Theta \times D)^*$;
\item there exists a data word $s_3 \in (\Gamma \times D)^*$ such that $s_1 \otimes s_3 \in \mathcal{T}_1$ and $s_3 \otimes s_2 \in \mathcal{T}_2$.
\end{itemize}
\end{definition}

\subsection{Monadic Second-Order Logic}

If $\Sigma$ is a finite set of labels and $\domain = (D, \leq, +)$ is a linear group, we define $\mso(\Sigma, \domain)$ formulae that can be interpreted on finite data words. A $\mso(\Sigma, \domain)$ formula is built up from atomic formulae of the following forms:
\[
x=y \quad x \in X \quad L_{\sigma}(x) \quad S(x, y) \quad E \leq 0
\]
where $x, y$ are first-order position variables ranging over positions of the data word, $X$ is a second-order variable ranging over sets of positions, and $=$ and $\leq$ are interpreted in the natural way. $\sigma \in \Sigma$ is a label and $L_{\sigma}(x)$ states that the letter at the $x$-th position is labeled with $\sigma$. $S(x, y)$ states that $y$ is the position next to $x$, i.e., $y = x+1$. $E$ is an expression evaluated to values in $D$ and $E \leq 0$ just states that the value of $E$ is not positive. The expression consists of terms of the following forms:
\[
dt(x) \quad sum\_dt(X) \quad 0
\]
where $dt(x)$ is the data value stored in the $x$-th position, and $sum\_dt(X)$ is the sum of all data values stored in the positions represented by $X$. These terms can be combined arbitrarily using the inverse operation $-$ and the additive operation $+$. 

Atomic formulae are connected with boolean connectives $\neg, \land, \lor$ and (first-order or second-order) quantifiers $\exists, \forall$.


The logic can be extended to $\mso(\Sigma, \Gamma, \domain)$, being interpreted on $(\Sigma, \Gamma, D)$-transformation instances.
 Besides those allowed in $\mso(\Sigma, \domain)$, there is one extra form of atomic formulae: 
\[
L_{\gamma}(x)
\]
which states that the output for the $x$-th position is labeled $\gamma$. In addition, an expression can refer to the data values of both the input and the output with the following terms:
\[
dt_{in}(x) \quad dt_{out}(x) \quad sum\_dt_{in}(X) \quad sum\_dt_{out}(X) \quad 0
\]

If an $\mso(\Sigma, \Gamma, \domain)$ formula contains first-order variables and quantifiers only, i.e., it is set-variable-free, we call it a first-order formula.
The set of first-order formulae is denoted as $\fo(\Sigma, \Gamma, \domain)$. 

\section{Streaming Register Transducer}

In this section, we introduce streaming register transducer, the model we propose to recognize streaming transformations. We present its definition and several subclasses of it, and discuss closure properties.

\begin{definition}[\SRT]
\label{def:srt}
A \emph{streaming register transducer} (\SRT) is defined as an octuple $(\Sigma, \Gamma, \domain, Q, q_0, k, R_0, \Delta)$ where
\begin{itemize}
\item $\Sigma$ and $\Gamma$ are the finite sets of input and output labels, respectively;
\item $\domain = (D, \leq, +)$ is a linear group;
\item $Q$ is a finite set of states and $q_0 \in Q$ is the initial state;
\item $k \in \mathbb{N}$ is the number of registers;
\item $R_0 \in D^{k}$ is the initial assignments to the registers;
\item $\Delta \subseteq Q \times \Sigma \times \{>, =, <\}^{k} \times \{\oldv, \newv, \addv\}^{k} \times \{1, \dots, k\} \times \Gamma \times Q$ is a set of (nondeterministic) transitions.
\end{itemize}
\end{definition}

Intuitively, an \SRT~ takes as input an $\omega$-word over the alphabet $\Sigma \times D$ and performs letter-to-letter transformation and outputs a finite or $\omega$-word over the alphabet $\Gamma \times D$. It maintains a finite state and a set of registers that store values from $D$. We also call it a $(\Sigma, \Gamma, \domain)$-\SRT~ (or just \SRT~ if the signature is arbitrary) and define it as $(Q, q_0, k, R_0, \Delta)$. 
We use $\SRT(k)$ to represent the class of \SRT~ with $k$ registers.

Whenever a letter $(\sigma, d)$ is read from the head, it checks the label $\sigma$ and compares the data value $d$ with the current values stores in the registers with respect to the linear order, and makes a state transition. In addition, it also updates each register with one of four possible values: the current value, the input value $d$, the sum of the two values, or $0$. It then optionally outputs a letter in the output alphabet $\Gamma \times D$, whose data value is copied from one of the registers. 
We next formally define the configurations, runs and trails of \SRT.

\subsection{Configuration and Run}
\label{sec:config}

A \emph{configuration} of an \SRT~ as defined in Definition~\ref{def:srt} is of the form $(q, R)$ where $q \in Q$ is the current state and $R \in \domain^{k}$ is the current values stored in the $k$ registers.
A \emph{transition step} between two configurations can be represented as $(q, R) \xrightarrow[{(\gamma, d')}]{(\sigma,d)} (q', R')$: from an old configuration $(q, R)$, the transducer reads the next letter $(\sigma, d)$ from the input stream, transits to a new configuration $(q', R')$, and appends $(\gamma, d')$ to the output stream. 

The transition step is enabled by a transition $(q, \sigma, l, m, u, \gamma, q') \in \Delta$ if the following conditions are satisfied:
\begin{enumerate}
\item there is a vector $l \in \{>, =, <\}^{k}$ such that for any $1 \leq i \leq k$, one of the following three conditions holds:
\begin{itemize}
\item $d > R[i]$ and $l[i]$ is $>$; 
\item $d = R[i]$ and $l[i]$ is $=$; 
\item $d < R[i]$ and $l[i]$ is $<$.
\end{itemize}
\item for any $1 \leq m \leq k$, one of the following conditions holds:
\begin{itemize}
\item $m[i] = \oldv$ and $R'[i] = R[i]$;
\item $m[i] = \newv$ and $R'[i] = d$; or
\item $m[i] = \addv$ and $R'[i] = R[i] + d$;
\end{itemize}
\item 
$t = (\gamma, R'[u])$.
\end{enumerate}

\begin{definition}[Run]
Let $\mathcal{S} = (Q, q_0, k, R_0, \Delta)$ be a $(\Sigma, \Gamma, D)$-\SRT~ and let $s \in (\Sigma \times \domain)^*$ be a finite data word. A \emph{run} $\rho$ of $\mathcal{S}$ over $s$ that generates $t$ is a sequence of $n$ transition steps
\[ \rho : ~(q_0, R_0) \xrightarrow[{t[0]}]{s[0]} (q_1, R_1) \xrightarrow[{t[1]}]{s[1]} \cdots \xrightarrow[{t[n-1]}]{s[n-1]} (q_n, R_n) \]
such that for every $0 \leq i < n$, the transition step $(q_i, R_i) \xrightarrow[{t[i]}]{s[i]} (q_{i+1}, R_{i+1})$ is enabled by a transition in $\Delta$.
\end{definition}
Notice that for any $0 < m < n$, the subsequence of $\rho$ from $(q_0, R_0)$ to $(q_m, R_m)$ is also a run over the partial input $s[0..m-1]$. We call a it a \emph{prefix} of $\rho$ and denote it as $\rho[0..m]$.

The run $\rho$ actually defines a transformation instance, i.e., an infinite sequence of input-output pairs: 
\[ \semantics{\rho} \defeq s \otimes \langle t_i \rangle_{i \geq 0}\]
Any \SRT~ defines a streaming transformation:
\[\semantics{\mathcal{S}} \defeq \{\semantics{\rho} \mid \rho \textrm{ is a run of } \mathcal{T} \} \]
We call a transformation \emph{\SRT-definable} if there is an \SRT~ that defines the transformation.

To illustrate the expressiveness of \SRT, we now show three examples that do aggregation, sampling and filtering, respectively. 

\begin{figure}
\begin{subfigure}[b]{\columnwidth}
\begin{center}
\unitlength=2.5mm
\includegraphics{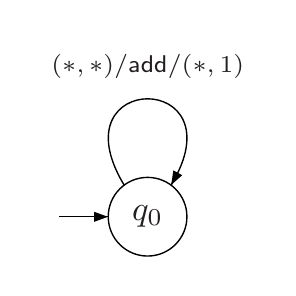}
\end{center}
\caption{$\mathcal{S}_{\summ}$ in Example~\ref{ex:sum}} \label{fig:sum}
\end{subfigure}

\begin{subfigure}[b]{\columnwidth}
\begin{center}
\unitlength=2.5mm
\includegraphics{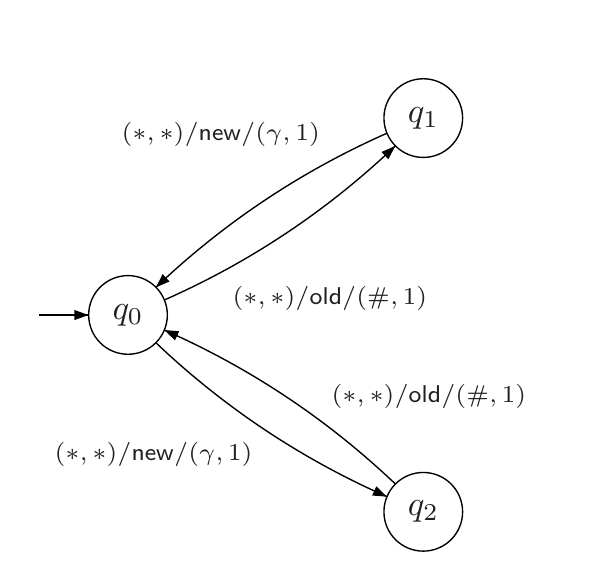}
\end{center}
\caption{$\mathcal{S}^2_{\sampling}$ in Example~\ref{ex:sampling}} \label{fig:sampling}
\end{subfigure}

\begin{subfigure}[b]{\columnwidth}
\begin{center}
\unitlength=2.5mm
\includegraphics{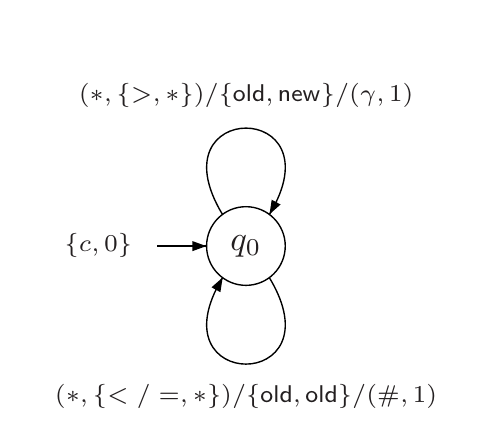}
\end{center}
\caption{$\mathcal{S}_{\filtering}$ in Example~\ref{ex:filtering}} \label{fig:filtering}
\end{subfigure}
\caption{Examples of \SRT.}
\end{figure}

\begin{example}[Sum Computation]
\label{ex:sum}
One can construct an $(\Sigma, \Gamma, \domain)$-\SRT~ to receive any possible input data value and emit the sum of the data values received thus far: 
\[
\mathcal{S}_{\summ} \defeq (\{q_0\}, q_0, 1, 0, \Delta)
\] 
where $\Sigma, \Gamma, \domain$ are all arbitrary. $\Delta$ is the set of transitions of the form $(q_0, \sigma, l, \addv, 1, \gamma, q_0)$, where $\sigma$ and $l$ are arbitrary, and $\gamma$  is a fixed output label.
Intuitively, in every position, $\mathcal{S}_{\summ}$ ignores the input label and adds the input data value into the register. Formally, if the input is an infinite sequence $\langle (*, d_i)\rangle_{i \geq 0}$, then the output is an infinite sequence $\langle (\gamma, \sum_{0 \leq i' \leq i} d_{i'})\rangle_{i \geq 0}$.
Figure~\ref{fig:sum} shows a graphical representation of the transducer.
\end{example}

\begin{example}[Random Sampling]
\label{ex:sampling}
Let $\Sigma$ be input labels, $\Gamma = \{\gamma, \#\}$ be output labels, $\domain$ be a linear group. One can construct a $(\Sigma, \Gamma, \domain)$-\SRT~ to nondeterministically output a data value from every $n$ input data values: 
\[
\mathcal{S}^n_{\sampling} \defeq (\{q_0, \dots, q_{2n-2}\}, q_0, 1, 0, \Delta)
\]
where $\Delta$ includes the following four kinds of transitions:
\begin{itemize}
\item $(q_i, \sigma, l, \oldv, 1, \#, q_{i+1})$, for any $0 \leq i \leq n-2$ or $n \leq i \leq 2n-3$
\item $(q_i, \sigma, l, \newv, 1, \gamma, q_{i+n+1})$, for any $0 \leq i \leq n-2$
\item $(q_{n-1}, \sigma, l, \newv, 1, \gamma, q_0)$
\item $(q_{2n-2}, \sigma, l, \oldv, 1, \#, q_0)$
\end{itemize}
Figure~\ref{fig:sampling} shows a graphical representation of $\mathcal{S}^2_{\sampling}$, which samples one value from every $2$ input values. 
\end{example}

\begin{example}[High-Pass Filtering]
\label{ex:filtering}
One can construct an $(\Sigma, \Gamma, \domain)$-\SRT~ to filter out all inputs whose data values are less than or equal to $c$, a constant value from the underlying data domain:
\[
\mathcal{S}_{\filtering} \defeq ( \{q_0\}, q_0, 2, \{c, 0\}, \Delta)
\] 
There are two registers, one storing the constant $c$ and the other one storing the current input data value. 
The transitions in $\Delta$ are regardless of the input label and only determined by the input data value $d$: if $d$ is greater than $c$, then update the second register with $d$ and output $(\gamma, d)$ where $\gamma$ is a fixed output label; otherwise keep the registers unchanged and output $(\#, d)$.
Figure~\ref{fig:filtering} shows a graphical representation of this transducer.
\end{example}

\subsection{Trail}

We next introduce \emph{trails}, which are essentially abstract runs of \SRT, ignoring the concrete data values of a run and recording only how the values are compared and updated.
\begin{definition}[Trail]
Let $\Sigma$ be an input label set and $\Gamma$ be an output label set, $k > 0$ is a natural number. A $(\Sigma, \Gamma, k)$-\emph{trail} is a finite word over the finite alphabet $\Sigma \times \Theta_k \times \Gamma$ where
\[ \Theta_k \defeq \{>, =, <\}^k \times \{\oldv, \newv, \addv\}^k \times \{1, \dots, k\} \] 
\end{definition}
We use three operators $\Pi_{\compare}$, $\Pi_{\update}$ and $\Pi_{\outputs}$ to extract from a trail the three components of $\Theta_k$. For example, $\Pi_{\compare}(T)$ is the sequence of comparison vectors.

An \SRT~ delineates a regular set of trails, and any run of the \SRT~ has a corresponding trail from the set. See the following definitions.
\begin{definition}[Trail Automaton]
\label{def:trail-automaton}
Let $\mathcal{S} = (Q, q_0, k, R_0, \Delta)$ be a $(\Sigma, \Gamma, \domain)$-\SRT. The \emph{trail automaton} for $\mathcal{S}$ is a nondeterministic automaton 
\[ \tilde{\mathcal{S}} \defeq \big(\Sigma \times \Theta_k \times \Gamma,~ Q \times \mathcal{W}_k,~ (q_0, W_0),~ \tilde{\Delta},~ Q \times \mathcal{W}_k \big) \] where 
\begin{itemize}
\item $\Sigma \times \Theta_k \times \Gamma$ is a finite alphabet, i.e., accepting $(\Sigma, \Gamma, k)$-trails; and
\item $Q \times \mathcal{W}_k$ is the set of states (and also accepting states), where $\mathcal{W}_k$ is the set of weak ordering of $k$ elements;
\item and $(q_0, W_0) \in Q \times \mathcal{W}_k$ is the initial state, where $W_0$ is the weak ordering between $R_0$;
\item $\tilde{\Delta} \subseteq Q \times \mathcal{W}_k \times \Sigma \times \Theta_k \times \Gamma \times Q \times \mathcal{W}_k$ is the set of transitions, which will be defined below. 
\end{itemize}
Intuitively, every state of $\tilde{\mathcal{S}}$ consists of two components $(q, W)$ where $q$ is a state of the original $\mathcal{S}$, $W$ keeps track the ordering of the values of the $k$ registers. A transition in $\tilde{\Delta}$ extends an original transition from $\Delta$ with old and new register ordering, and must satisfy two conditions:
\begin{enumerate}
\item the old register ordering is compatible with the guard of the transition;
\item the new register ordering is obtained from modifying the old register ordering with the transition's register updates.
\end{enumerate}

\end{definition}
\begin{definition}
Let $\mathcal{S}$ be an \SRT. A trail is called an $\mathcal{S}$-trail if it is accepted by the trail automaton $\tilde{S}$.
\end{definition}
Similar to runs, any prefix of an $\mathcal{S}$-trail $T$ is also an $\mathcal{S}$-trail, and we call it a prefix of $T$.

\begin{definition}
\label{def:corresponding-trail}
Let $\mathcal{S} = (Q, q_0, k, R_0, \Delta)$ be a $(\Sigma, \Gamma, \domain)$-\SRT~ and let
 \[ \rho : ~(q_0, R_0) \xrightarrow[{(\gamma_0, d'_0)}]{(\sigma_0, d_0)} (q_1, R_1) \xrightarrow[{(\gamma_1, d'_1)}]{(\sigma_1, d_1)}  \cdots \xrightarrow[{(\gamma_{n-1}, d'_{n-1})}]{(\sigma_{n-1}, d_{n-1})} (q_n, R_n) \] 
 be a run of $\mathcal{S}$. There is a corresponding $\mathcal{S}$-trail, denoted as 
\[ \tilde{\rho} \defeq \langle~ (\sigma_i, T_i, \gamma_i) ~\rangle_{i \geq 0} \] where 
every $(q_i, \sigma, T_i, \gamma, q_{i+1})$ forms the unique transition in $\Delta$ that enables the $i$-th transition step, $(q_i, R_i) \xrightarrow[{(\gamma_i, d'_i)}]{(\sigma_i, d_i)} (q_{i+1}, R_{i+1})$.
\end{definition}

There is a critical property about trails that will be used in our proofs: a trail completely determines every register's value and the output's data value, in terms of the input. We first define a family of sets of natural numbers based on a given trail.
\begin{definition}
\label{def:set}
Given a $k$-register \SRT~$\mathcal{S}$, and an $\mathcal{S}$-trail $T$. For any $0 \leq i < |T|$ and any $1 \leq j \leq k$, a set of all natural numbers $P^T_{i,j}$ can be defined as all numbers $n \geq -1$ that satisfies the following conditions:
\begin{itemize}
\item $\Pi_{\update}(T)[n][j]$ is $\newv$ or $\addv$;
\item for any $n < n' \leq i$, $\Pi_{\update}(T)[n'][j]$ is not $\newv$.
\end{itemize}
Note that $-1$ is a special value in the set, indicating that the $j$-th register is never reset by the $i$-th step.
\end{definition}
\begin{lemma}\label{thm:srt-reg}
Let $\mathcal{S}$ be an \SRT~ with $k$ registers and $\rho$ be a run of $\mathcal{S}$ over $s$ generating $t$, represented as
 \[
(q_0, R_0) \xrightarrow[{t[0]}]{s[0]} (q_1, R_1) \xrightarrow[{t[1]}]{s[1]} \cdots \xrightarrow[{t[n-1]}]{s[n-1]} (q_n, R_n)
 \]
Then for any $0 \leq i < n$ and any $1 \leq j \leq k$, 
\[
R_i[j] = \sum_{e \in P^T_{i,j}} \ite (e < 0,~ R_0[j],~ s[e])
\]
where $T$ is the corresponding trail for $\rho$ and $P^T_{i,j}$ is the set as defined in Definition~\ref{def:set}.
\end{lemma}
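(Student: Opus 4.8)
## Proof Proposal

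\textbf{Overall approach.} The plan is to prove the claim by induction on $i$, the index of the configuration. Intuitively, $P^T_{i,j}$ collects all input positions $e$ (or the sentinel $-1$) that still "contribute" to register $j$ at step $i$: these are exactly the positions where register $j$ was written with $\newv$ or $\addv$ and that have not been wiped out by a later $\newv$ assignment. Summing the corresponding input values (or $R_0[j]$ when $e=-1$) should recover $R_i[j]$. The key observation driving the induction is that a single transition step modifies each register in one of three ways — $\oldv$ (no change), $\newv$ (overwrite with $d$), or $\addv$ (add $d$) — and each of these has a clean, matching effect on the set $P^T_{i,j}$.

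\textbf{Key steps.} First I would set up notation: write $T = \tilde\rho$ for the corresponding trail, so that $s[i]$ is the input data value read at step $i$ and $\Pi_{\update}(T)[i][j] \in \{\oldv,\newv,\addv\}$ records how register $j$ is updated at step $i$; recall from condition~(2) of the transition-step definition that $R_{i+1}[j]$ equals $R_i[j]$, $s[i]$, or $R_i[j]+s[i]$ accordingly. \emph{Base case} ($i = 0$): no update step has occurred, so the second defining condition of Definition~\ref{def:set} is vacuous and $P^T_{0,j}$ contains exactly the indices $n \geq -1$ with $n < 0$, i.e., only $-1$ (using the convention that if there is no $\newv$ before position $0$ then $-1$ qualifies); hence the right-hand side is $R_0[j]$, matching $R_i[j]$ at $i=0$. (I would double-check the indexing convention here — whether "$0 \le i < n$" in the lemma is meant to include the initial configuration $R_0$ or starts at $R_1$; the statement as written seems to want $R_i[j]$ for the $i$-th configuration, and the base case should be whichever index corresponds to "no updates yet.")

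\textbf{Inductive step.} Assume the identity holds for $i$; I would prove it for $i+1$ by a three-way case split on $\Pi_{\update}(T)[i][j]$:
\begin{itemize}
\item If it is $\oldv$: then $R_{i+1}[j] = R_i[j]$, and I claim $P^T_{i+1,j} = P^T_{i,j}$. Indeed, position $i$ is not a $\newv$ write, so no element is removed; and $i$ itself is not $\newv$ or $\addv$, so it is not added. The sums agree by the induction hypothesis.
\item If it is $\addv$: then $R_{i+1}[j] = R_i[j] + s[i]$, and I claim $P^T_{i+1,j} = P^T_{i,j} \cup \{i\}$ with $i \notin P^T_{i,j}$ (since the old set only ranges over $n \le i-1$... actually $n \le i$, so I must be careful: check that with the bound "$n \le i$" in the definition, $i$ could already be in $P^T_{i,j}$; resolving this indexing subtlety is exactly the delicate point — see below). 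Granting the set equality, the right-hand side gains exactly the term $\ite(i<0, R_0[j], s[i]) = s[i]$, matching the $+s[i]$.
\item If it is $\newv$: then $R_{i+1}[j] = s[i]$, and the second condition of Definition~\ref{def:set} now forces every $n' = i$ to block all earlier $n$, so $P^T_{i+1,j} = \{i\}$ (the only $n$ with no $\newv$ strictly between $n$ and $i+1$, subject to $\Pi_{\update}(T)[n][j]$ being $\newv$ or $\addv$, is $n=i$ itself since $\Pi_{\update}(T)[i][j]=\newv$). Hence the right-hand side collapses to the single term $s[i]$.
\end{itemize}
Each set-equality claim is a short unfolding of Definition~\ref{def:set}, and then the arithmetic follows from the induction hypothesis and the additive group axioms.

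\textbf{Main obstacle.} The delicate part is getting the index bookkeeping exactly right — in particular reconciling the upper bound "$n \le i$" (rather than "$n < i$" or "$n \le i-1$") in Definition~\ref{def:set} with the step semantics, and pinning down whether $R_i$ in the lemma refers to the configuration \emph{before} or \emph{after} processing $s[i]$. Depending on the intended convention, either $P^T_{i,j}$ already "sees" the update at step $i$ (so the claimed identity for $R_i[j]$ uses updates $T[0..i]$) or it does not (using $T[0..i-1]$). I would fix the convention up front so that the $\addv$ case does not double-count position $i$, and then the three cases above go through mechanically. Everything else is routine; no subtlety from the linear-order structure enters, since only the additive group operations appear in the statement.
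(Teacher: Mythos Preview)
Your proposal is correct and follows exactly the approach the paper takes: the paper's proof is simply ``By induction on $i$,'' and your three-way case split on $\Pi_{\update}(T)[i][j]$ is precisely the routine unfolding that induction requires. Your caution about the indexing convention is well-placed but does not reflect a gap in the argument---once the convention is fixed (so that $P^T_{i,j}$ records the updates contributing to the register value \emph{after} the $i$-th step), the three cases go through mechanically as you describe.
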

\begin{proof}
By induction on $i$.
\end{proof}

A trail can also be used to determine if a run it corresponds to is dead, i.e., cannot be extended with any further input.
\begin{lemma}
\label{thm:trail}
Let $\mathcal{S}$ be an \SRT. A run $\rho$ of $\mathcal{S}$ is not a prefix of any other run if and only if its corresponding trail $\tilde{\rho}$ is not a prefix of any other $\mathcal{S}$-trail.
\end{lemma}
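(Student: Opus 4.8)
The plan is to prove both directions of the biconditional, using the fact (established by the trail automaton of Definition~\ref{def:trail-automaton}) that every run has a unique corresponding trail and that trails abstract runs faithfully. Throughout, fix $\mathcal{S} = (Q, q_0, k, R_0, \Delta)$ and let $\rho$ be a run of $\mathcal{S}$ over some input $s \in (\Sigma \times \domain)^*$ ending in configuration $(q_n, R_n)$, with corresponding trail $\tilde{\rho}$.

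For the forward direction (contrapositive), suppose $\tilde{\rho}$ \emph{is} a proper prefix of some $\mathcal{S}$-trail; I would show $\rho$ can be extended. Since $\tilde{\rho}$ extends to a longer $\mathcal{S}$-trail, the trail automaton $\tilde{\mathcal{S}}$ has a transition out of the state $(q_n, W_n)$ it reaches after reading $\tilde{\rho}$, say labeled $(\sigma, (l, u, j), \gamma)$ with $l$ a comparison vector compatible with the register ordering $W_n$ — this compatibility is exactly condition (1) in Definition~\ref{def:trail-automaton}. The key step is to produce an actual data value $d \in D$ realizing the comparison pattern $l$ against the concrete register contents $R_n$: because $W_n$ correctly records the weak ordering of $R_n$ (an invariant of the trail automaton construction) and $l$ is compatible with $W_n$, the value $d$ can be chosen in the appropriate interval of $D$ (using density if $l$ places $d$ strictly between two register values, or, in the discrete case, using the least positive element to step just above/below; a value above all registers or below all registers always exists since $D$ is infinite). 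With such $d$, the original transition $(q_n, \sigma, l, m, u, \gamma, q_{n+1}) \in \Delta$ underlying the trail-automaton transition is enabled from $(q_n, R_n)$ by Definition~\ref{def:srt}'s transition-step conditions, yielding a one-step extension of $\rho$.

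For the backward direction (also contrapositive), suppose $\rho$ is a proper prefix of a run $\rho'$; I want $\tilde{\rho}$ to be a proper prefix of an $\mathcal{S}$-trail. This is the easy direction: $\rho'$ is itself a run, so it has a corresponding $\mathcal{S}$-trail $\tilde{\rho'}$, and since the correspondence $\rho \mapsto \tilde{\rho}$ (Definition~\ref{def:corresponding-trail}) is defined step-by-step and respects prefixes (each transition step contributes exactly one trail letter, determined locally), $\tilde{\rho}$ is the length-$|\rho|$ prefix of $\tilde{\rho'}$, hence a proper prefix of the $\mathcal{S}$-trail $\tilde{\rho'}$. Combining the two directions gives the biconditional in its original (non-contrapositive) form.

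The main obstacle is the realizability argument in the forward direction: I must be careful that the weak ordering $W_n$ tracked by the trail automaton genuinely matches the concrete ordering of $R_n$, which requires an auxiliary invariant proved by induction along $\rho$ — namely that after reading any prefix of an $\mathcal{S}$-trail, the automaton's $\mathcal{W}_k$-component equals the weak ordering of the corresponding concrete register vector. This hinges on condition (2) of Definition~\ref{def:trail-automaton} (the new register ordering is obtained by applying the updates) together with Lemma~\ref{thm:srt-reg}, which pins down each register's value as a sum of input/initial values; one then checks that additive updates by the chosen $d$ transform orderings exactly as the trail automaton's update rule prescribes. The case split between dense and discrete linear groups (Proposition~\ref{thm:dense-discrete}) enters only to guarantee that every order-type gap demanded by $l$ is populated, and is otherwise routine.
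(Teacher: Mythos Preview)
Your proposal is correct and follows essentially the same approach as the paper's proof: both argue the biconditional via its contrapositives, with the easy direction (run extension $\Rightarrow$ trail extension) immediate from Definition~\ref{def:corresponding-trail}, and the substantive direction (trail extension $\Rightarrow$ run extension) resting on the realizability of a concrete data value $d$ from the weak-ordering component $W$ carried by the trail automaton. Your treatment is in fact more detailed than the paper's---you make explicit the invariant that $W_n$ coincides with the concrete ordering of $R_n$ and you address the dense/discrete case split---whereas the paper simply asserts that the ordering information in $W$ ``guarantees'' a satisfying $d$ exists.
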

\begin{proof}
Let $\rho$ ends at configuration $(q, R)$. If the run can be extended with a transition step $(q, R) \xrightarrow[{(\gamma, d_2)}]{(\sigma, d_1)} (q', R')$, which is enabled by a transition $(q, \sigma, T, \gamma, q')$. Then by Definition~\ref{def:corresponding-trail}, the trail $\tilde{\rho}$ can also be extended to $\tilde{\rho} \cdot (\sigma, T,\gamma)$.

In reverse, the trail automaton $\tilde{\mathcal{S}}$ accepts $\tilde{\rho}$ and let us say the ending state is $(q, W)$. By Definition~\ref{def:trail-automaton}, the ordering information between all registers has been encoded in $W$ and guarantees that whenever $(q, W)$ accepts one more trail step $(\sigma, T, \gamma)$ and transit to $(q', W')$, the guard in the transition $(q, \sigma, T, \gamma, q')$ can always be satisfied by a concrete input data value $d$. Therefore $\rho$ can also be extended.
\end{proof}

\subsection{Closure Properties}

\begin{theorem}\label{thm:srt-closure}
The \SRT-definable languages are closed under union and intersection.
\end{theorem}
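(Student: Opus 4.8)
The plan is to prove the two closure properties by explicit product constructions on \SRT, mirroring the standard union/intersection constructions for nondeterministic automata but taking care of the shared data domain and the registers. Fix two $(\Sigma, \Gamma, \domain)$-\SRT, $\mathcal{S}_1 = (Q_1, q_0^1, k_1, R_0^1, \Delta_1)$ and $\mathcal{S}_2 = (Q_2, q_0^2, k_2, R_0^2, \Delta_2)$, over the same signature (for intersection we must assume $\Sigma, \Gamma, \domain$ agree; for union likewise, so that the two transformations live over the same input-output alphabet).

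For \emph{union}, I would simply take a disjoint-union construction with a nondeterministic initial choice. Concretely, build an \SRT~ with $\max(k_1, k_2)$ registers (pad the machine with fewer registers by unused registers initialized to $0$ and always updated with $\oldv$), state set $\{q_{\mathit{init}}\} \uplus Q_1 \uplus Q_2$, and transitions: from $q_{\mathit{init}}$ simulate in one step either a first-step transition of $\mathcal{S}_1$ or of $\mathcal{S}_2$, landing in the appropriate copy; within each copy behave exactly as $\mathcal{S}_1$ or $\mathcal{S}_2$. A small wrinkle is that $q_{\mathit{init}}$ must already carry the right initial register contents; since $R_0^1$ and $R_0^2$ may differ, a cleaner variant is to keep two initial states $q_0^1, q_0^2$ and a nondeterministic choice is unnecessary — instead allow the trail/run semantics to start from whichever is reached. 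The simplest honest fix: since a run is defined as starting at $(q_0, R_0)$ with a \emph{single} fixed $R_0$, I would merge by introducing $\max(k_1,k_2)$ registers holding $R_0^1$ in a first block and $R_0^2$ in a second block, and have the transitions of the $\mathcal{S}_i$-copy only touch block $i$. Then $\semantics{\mathcal{S}} = \semantics{\mathcal{S}_1} \cup \semantics{\mathcal{S}_2}$ because every run of the product is, after the first step, a run of exactly one $\mathcal{S}_i$ on the same input producing the same output, and conversely; prefix-closedness is inherited since each $\semantics{\mathcal{S}_i}$ is prefix-closed.

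For \emph{intersection}, I would take the synchronized product: register set of size $k_1 + k_2$ (the first $k_1$ simulate $\mathcal{S}_1$, the last $k_2$ simulate $\mathcal{S}_2$), states $Q_1 \times Q_2$, initial state $(q_0^1, q_0^2)$, initial registers $R_0^1 \cdot R_0^2$ (concatenation), and a transition for $((q_1,q_2), \sigma, \dots, \gamma, (q_1', q_2'))$ whenever $\mathcal{S}_1$ has a transition $(q_1, \sigma, l_1, m_1, u_1, \gamma, q_1')$ and $\mathcal{S}_2$ has $(q_2, \sigma, l_2, m_2, u_2, \gamma, q_2')$ \emph{with the same output label $\gamma$ and such that the chosen output register indices agree on the emitted data value}. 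The subtle point — and the main obstacle — is the output: a transition step of an \SRT~ emits $(\gamma, R'[u])$, a single data value copied from one register. For the product to realize exactly those input-output words common to both, the value $R_1'[u_1]$ computed by $\mathcal{S}_1$ must equal $R_2'[u_2]$ computed by $\mathcal{S}_2$ \emph{on that input}; but this is a semantic, input-dependent condition, not something we can read off the transition syntax. The resolution is to push this into the guard/trail machinery: using Lemma~\ref{thm:srt-reg}, each register's value is a fixed $\domain$-linear combination of the input values seen so far (with coefficients determined by the trail). So I would run the product not on raw configurations but track enough information — essentially, work at the level of the trail automata $\tilde{\mathcal{S}_1}, \tilde{\mathcal{S}_2}$ and additionally remember, for the currently-selected output registers $u_1$ of $\mathcal{S}_1$ and $u_2$ of $\mathcal{S}_2$, the (bounded) symbolic description of which past input positions they sum; two such descriptions induce equal values for \emph{all} inputs iff they are syntactically the same multiset, which \emph{is} checkable. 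The cleanest packaging: prove that $\semantics{\mathcal{S}_i}$ is definable in $\mso(\Sigma,\Gamma,\domain)$ is not available yet at this point, so I would instead argue directly that equality-of-outputs can be forced by adding registers: put the product's output register to copy, say, $R_1'[u_1]$, and add a transition guard (expressible since the register ordering is tracked by $\mathcal{W}_{k_1+k_2}$, and since $=$ is one of the comparison outcomes available on registers relative to the input value) — but comparisons are only between the \emph{input value} and registers, not register-to-register directly. Hence the honest route is the trail-automaton product: intersect $\tilde{\mathcal{S}_1}$ and $\tilde{\mathcal{S}_2}$ as finite automata over a refined alphabet that records, per step, the comparison vector of the \emph{combined} $k_1+k_2$ registers against the input, and accept only trails where the $\mathcal{S}_1$-part and $\mathcal{S}_2$-part output the same $\gamma$ and their selected output-registers have, by Lemma~\ref{thm:srt-reg}, provably identical symbolic value; then show every common transformation instance of $\semantics{\mathcal{S}_1} \cap \semantics{\mathcal{S}_2}$ arises from such a trail and conversely, invoking Lemma~\ref{thm:trail} for prefix-closedness/non-dead runs.

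I expect the union case to be essentially a one-paragraph routine construction, and the intersection case to carry all the weight: the obstacle is precisely reconciling the two machines' single-register output emissions, and I would spend the proof establishing that "outputs agree on all inputs" is a decidable syntactic condition on trails via Lemma~\ref{thm:srt-reg}, then checking the two directions of $\semantics{\text{product}} = \semantics{\mathcal{S}_1} \cap \semantics{\mathcal{S}_2}$ by the obvious run-decomposition argument.
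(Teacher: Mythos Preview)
For union, your construction coincides with the paper's: a fresh initial state nondeterministically simulates the first step of either machine, carrying $k_1+k_2$ registers initialized with $R_0^1$ and $R_0^2$ side by side so that each copy sees its own initial values.

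For intersection there is a real gap. You correctly locate the obstacle---the two machines must emit the same data value, yet an \SRT\ guard cannot directly compare two registers---and you even briefly consider tracking the combined register ordering via $\mathcal{W}_{k_1+k_2}$, but you abandon that idea on the ground that guards only compare the input against registers. That dismissal is where you lose the thread. The paper's construction is precisely the ordering-tracking idea you set aside: the product has state set $Q_1 \times Q_2 \times \{>,=,<\}^{k_1 \cdot k_2}$, the third component recording the current order between every $\mathcal{S}_1$-register and every $\mathcal{S}_2$-register. This array is initialized from the fixed constants $R_0^1,R_0^2$ and then \emph{maintained in the state}, updated at each step from the old array, the guard vector $l$ (which reveals how the fresh input $d$ compares to every register), and the update vector $m$ (which determines each new register value in terms of $d$ and the old value). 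No register-to-register guard is needed: the equality of the two designated output registers is read off the array stored in the current state, and a combined transition fires only when that array certifies equality. You conflated ``the guard cannot compare registers'' with ``the machine cannot know the register ordering''; the latter is false because the ordering can live in the finite control.

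The route you fall back on---accepting a product trail only when the two output registers have the \emph{same multiset} of contributing input positions in the sense of Lemma~\ref{thm:srt-reg}---does not define the intersection. That condition demands the outputs agree on \emph{every} input, whereas membership in $\semantics{\mathcal{S}_1}\cap\semantics{\mathcal{S}_2}$ only requires agreement on the particular input being read. Concretely, if $\mathcal{S}_1$ outputs a register holding $d_0$ and $\mathcal{S}_2$ outputs one holding $d_1$, and the guards along this run have already forced $d_1=d_0$, the instance lies in the intersection even though the position-multisets $\{0\}$ and $\{1\}$ differ. Your product would reject it, yielding a strict subset of $\semantics{\mathcal{S}_1}\cap\semantics{\mathcal{S}_2}$. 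The cross-ordering array is exactly the device that records such run-dependent equalities.
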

\begin{proof}
Let $\mathcal{S}_1$ and $\mathcal{S}_2$ be two \SRT~ with state sets $Q_1$ and $Q_2$, as well as $k_1$ and $k_2$ registers, respectively. The union of the two transformations can be defined using a linear combination of the two \SRT. The combined \SRT~ has a state set $Q_1 \cup Q_2 \cup \{q_c\}$ and $(k_1 + k_2)$ registers (all initialized as in $\mathcal{S}_1$ and $\mathcal{S}_2$), where $q_c$ is the new initial state which simulates all possible transitions from the initial states of $\mathcal{S}_1$ or $\mathcal{S}_2$. After the first transition, the transducer reaches a state in $Q_1$ or $Q_2$ and then simulates the behavior of $\mathcal{S}_1$ or $\mathcal{S}_2$, for the rest of the input.

The intersection of the two transformations can be defined by the \SRT~ described below. The transducer has a state set $Q_1 \times Q_2 \times \{>, =, <\}^{k_1 \cdot k_2}$ and equipped with $(k_1 + k_2)$ registers. Intuitively, the transducer simulates the transitions of $\mathcal{S}_1$ and $\mathcal{S}_2$ at the same time. Every state keeps track of three things: the current state in $Q_1$, the current state in $Q_2$, and the linear order between every pair of registers from $\mathcal{S}_1$ and $\mathcal{S}_2$ using a $k_1 \times k_2$ array. Note that the initial register values are all fixed and the array can be initialized based on these values. Every transition is the combination of a transition $\delta_1$ from $\mathcal{S}_1$ and a transition $\delta_2$ from $\mathcal{S}_2$ who agree on the outputs label.
 The transition makes sure the input satisfies the guards for both $\delta_1$ and $\delta_2$, and the two output register values are equal. For example, if $\delta_1$'s output register is $i_1$ and $\delta_2$'s output register is $i_2$, then the current state's register-comparison array needs to confirm that $i_1 = i_2$, i.e., $\delta_1$ and $\delta_2$ agree on what data value to output. 
\end{proof}

\begin{theorem}
\label{thm:srt-comp}
The \SRT-definable languages are not closed under composition.
\end{theorem}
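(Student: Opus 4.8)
The plan is to show that the composition of the sum transducer of Example~\ref{ex:sum} with itself is not \SRT-definable, even though each factor obviously is. First I would compute this composition. If the input data word is $d_0,d_1,d_2,\dots$, then the first copy of $\mathcal{S}_{\summ}$ produces the word of running sums $r_i=\sum_{0\le k\le i}d_k$, and feeding that to the second copy gives the output data word
\[ t_i \;=\; \sum_{0\le j\le i} r_j \;=\; \sum_{k=0}^{i}(i+1-k)\,d_k, \]
so $t_0=d_0$ and $t_1 = 2d_0+d_1$; the coefficient $2$ on $d_0$ is the crux. The obstruction to defining this by an \SRT~ is Lemma~\ref{thm:srt-reg}: along any run of any \SRT~ $\mathcal{S}=(Q,q_0,k,R_0,\Delta)$, every register value — and hence every emitted output data value — is a \emph{subset sum} of the input data values read so far plus a constant. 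Concretely, unwinding Definition~\ref{def:set} and Lemma~\ref{thm:srt-reg} (and using causality of runs), the data value output after reading the first two input letters $d_0,d_1$ has the form $c + \sum_{e\in A}d_e$ for some $A\subseteq\{0,1\}$ and some $c$ in the finite set $\{0,R_0[1],\dots,R_0[k]\}$, where $A$ and $c$ are determined by the trail of the run. The essential point is that $A$ is a \emph{set}, so each input letter enters with coefficient $0$ or $1$: no \SRT~ run can attach the coefficient $2$ that $t_1$ demands.

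To finish, I would argue by contradiction. Since $\mathcal{S}_{\summ}$ is agnostic to its domain, it suffices to fix $\domain=(\mathbb{R},\le,+)$ and assume some $\domain$-\SRT~ $\mathcal{S}=(Q,q_0,k,R_0,\Delta)$ satisfies $\semantics{\mathcal{S}}=\semantics{\mathcal{S}_{\summ}}\cdot\semantics{\mathcal{S}_{\summ}}$. Consider inputs of length $2$. For any $(d_0,d_1)\in\mathbb{R}^2$, the instance $(d_0,d_1)\otimes(d_0,\, 2d_0+d_1)$ lies in the composition, hence in $\semantics{\mathcal{S}}$, so $\mathcal{S}$ has a length-$2$ run on $(d_0,d_1)$ that emits $2d_0+d_1$ at the second position; by the structural fact above, that value equals $c+\sum_{e\in A}d_e$ for the pair $(c,A)$ coming from that run's trail, and so $2d_0+d_1 = c+\sum_{e\in A}d_e$. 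As $(d_0,d_1)$ ranges over $\mathbb{R}^2$, only finitely many pairs $(c,A)$ can occur (at most $4(k+1)$), say $(c_1,A_1),\dots,(c_M,A_M)$; therefore
\[ \mathbb{R}^2 \;=\; \bigcup_{m=1}^{M}\Big\{\,(d_0,d_1) \;:\; 2d_0+d_1 = c_m + \textstyle\sum_{e\in A_m}d_e \,\Big\}. \]
Each right-hand set is the zero locus of the affine form $2d_0 + d_1 - c_m - \sum_{e\in A_m}d_e$, whose $d_0$-coefficient is $2-[\,0\in A_m\,]\in\{1,2\}\neq 0$, so each is a genuine line in $\mathbb{R}^2$. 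But finitely many lines cannot cover $\mathbb{R}^2$. This contradiction shows $\semantics{\mathcal{S}_{\summ}}\cdot\semantics{\mathcal{S}_{\summ}}$ is not \SRT-definable, so the \SRT-definable languages are not closed under composition. (Using longer inputs and later output positions would give the same contradiction with coefficient $i+1$ in place of $2$.)

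The step that requires genuine care — and which I expect to be the only real work — is extracting the structural fact from Lemma~\ref{thm:srt-reg}: one must check that the index set $P^T_{i,j}$ of Definition~\ref{def:set} describing which inputs feed the register (hence the output) after two steps (a) respects causality, so it meets $\mathbb{N}$ only inside $\{0,1\}$, and (b) is genuinely a set, so each contributing input appears with multiplicity one. Point (b) is precisely why additive updates, which can accumulate sums, still cannot manufacture the coefficient $2$. Everything else — the closed form for $\mathcal{S}_{\summ}\cdot\mathcal{S}_{\summ}$, the finiteness of the set of pairs $(c_m,A_m)$, and the fact that a finite union of proper affine subspaces misses a point of $\mathbb{R}^2$ — is routine.
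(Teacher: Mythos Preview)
Your proof is correct and uses the same counterexample as the paper, namely $\semantics{\mathcal{S}_{\summ}}\cdot\semantics{\mathcal{S}_{\summ}}$, but the argument for why this composition is not \SRT-definable is organized differently. The paper fixes $\domain=(\mathbb{Z},\leq,+)$ and the constant input stream $d_i=1$, observes that the $i$-th output is $i(i+1)/2$, and then argues informally (``it is not hard to argue inductively'') that any \SRT~ output at position $i$ is bounded above by the running sum $\sum_{i'\le i}d_{i'}=i+1$, so quadratic growth is impossible. You instead fix $\domain=(\mathbb{R},\leq,+)$, look only at position $1$, and invoke Lemma~\ref{thm:srt-reg} and Definition~\ref{def:set} explicitly to conclude that the emitted value is an affine form in $(d_0,d_1)$ with coefficients in $\{0,1\}$ plus a constant from a finite set; the required value $2d_0+d_1$ then forces $(d_0,d_1)$ onto one of finitely many proper lines, which cannot cover $\mathbb{R}^2$. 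Your route is more rigorous in that it cites the subset-sum structure directly and handles nondeterminism via an explicit finite covering, and it needs only a length-$2$ prefix; the paper's growth-rate argument is quicker to state but leaves the key inductive bound to the reader. Both rest on the same structural fact---register values are subset sums of past inputs plus an initial constant---so neither is genuinely stronger.
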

\begin{proof}
It suffices to show a counterexample to disprove the closure property. 
Now consider the \SRT~$\mathcal{S}_{\summ}$ constructed in Example~\ref{ex:sum} and a transformation combination:
\[
\mathcal{T} \defeq \semantics{\mathcal{S}_{\summ}} \cdot \semantics{\mathcal{S}_{\summ}}
\]
We now prove that $\mathcal{T}$ is not \SRT-definable. Note that $\mathcal{T}$ is functional: for any input sequence $\langle (*, d_i)\rangle_{i \geq 0}$, the output sequence is 
\[
\langle (*, \sum_{0 \leq i' \leq i} \sum_{0 \leq i'' \leq i'} d_{i''})\rangle_{i \geq 0}
\]
Let us assume the domain is integers, and every position $i$'s data value is an integer $1$. Then output at the $i$-th position is always $i \cdot (i+1) / 2$, i.e., the value grows quadratically, which is not possible in \SRT. It is not hard to argue inductively that the maximum output value at a position $i$ is the sum of past inputs $\sum_{0 \leq i' \leq i} d_{i'}$, exactly what $\mathcal{S}_{\summ}$ outputs. The contradiction concludes the proof.
\end{proof}

\subsection{Subclasses of \SRT}

In this paper, we also discuss several subclasses of \SRT, as defined below.

\begin{definition}[Deterministic \SRT]
\label{def:dsrt}
An \SRT~$\mathcal{S}$ is \emph{deterministic} (denoted as \DSRT) if for any two transitions in $\mathcal{S}$,
say $(q_1, \sigma_1, l_1, m_1, u_1, \gamma_1, q'_1)$ and $(q_2, \sigma_2, l_2, m_2, u_2, \gamma_2, q'_2)$, $q_1 \neq q_2$ or $\sigma_1 \neq \sigma_2$ or $l_1 \neq l_2$.
\end{definition}

The \SRT~$\mathcal{S}_{\summ}$ (see Example~\ref{ex:sum}) and~$\mathcal{S}_{\filtering}$ (see Example~\ref{ex:filtering}) are deterministic and also examples of \DSRT. However, \SRT~ is strictly more expressive than \DSRT. For example, $\mathcal{S}_{\sampling}$ (see Example~ref{ex:sampling}) is not a \DSRT~ because for the first two cases of transitions, if the input data value is $d$, $q_i$ can nondeterministically choose to output $(\sigma, d)$ or $(\#, d)$ (which means no-output).

\begin{definition}[Add-free \SRT]
A streaming add-free register transducer (denoted as \addfree) is an \SRT~ in which the transition set is a subset of $Q \times \Sigma \times \{>, =, <\}^{k} \times \{\oldv, \newv\}^{k} \times \Gamma \times \{1, \dots, k\} \times Q$.
\end{definition}
In other words, \addfree~ disallows addition, so registers cannot be updated by adding up the current data value read from the input. Recall the three example \SRT: $\mathcal{S}_{\sampling}$ and $\mathcal{S}_{\filtering}$ are \addfree, but $\mathcal{S}_{\summ}$ is not.

\begin{definition}[Uninitialized \SRT]
A streaming add-free register transducer is \emph{uninitialized} (denoted as \uninit) if its initial assignments $R_0$ to registers are all-zero: $(0, \dots, 0)$.
\end{definition}
For example, both $\mathcal{S}_{\sampling}$ and $\mathcal{S}_{\summ}$ are \uninit; but $\mathcal{S}_{\filtering}$ is not (the first register needs to be initialized with the constant $c$).

\begin{definition}[Dense \SRT]
A streaming add-free register transducer is \emph{dense} (denoted as \dense) its data domain $\domain$ is dense.
\end{definition}
Recall Proposition~\ref{thm:dense-discrete}, if an \SRT~ is not an \dense, its data domain must be discrete.

Above subclasses of \SRT~ are orthogonal and can be combined. For example, we write \AU~ for add-free and uninitialized \SRT, and \AD~ for add-free and dense \SRT. We now discuss these subclasses' closure properties.

\begin{theorem}
\label{thm:addfree-closure}
The \addfree-definable languages are closed under union, intersection, and composition.
\end{theorem}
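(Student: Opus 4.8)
The plan is to treat union and intersection as essentially free, and to put all the work into composition.

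\textbf{Union and intersection.} I would first observe that neither construction in the proof of \thmref{srt-closure} ever introduces an $\addv$ update: the union construction merely adjoins a fresh initial state that replays the transitions of $\mathcal{S}_1$ and $\mathcal{S}_2$ verbatim, and the intersection construction forms a product that pairs a transition of $\mathcal{S}_1$ with one of $\mathcal{S}_2$ while keeping each register's own update. Hence, if $\mathcal{S}_1$ and $\mathcal{S}_2$ are \addfree, so are the resulting transducers, and closure under $\cup$ and $\cap$ is immediate.

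\textbf{Composition: the key structural fact.} By \lemref{srt-reg} specialized to the add-free case, where no $\addv$ ever occurs, the set $P^T_{i,j}$ is always a singleton; thus in an \addfree every register holds, at every step, either one of its initial constants or the data value read verbatim at one specific input position, and the same therefore holds for every value it outputs. I would then establish a ``conservation'' observation about feeding the output stream $s_3$ of an add-free $\mathcal{S}_1$ (with $k_1$ registers) into an add-free $\mathcal{S}_2$ (with $k_2$ registers): any input value $s_1[e]$ that can still surface in the output $s_2$ at some later step must, at the current step, already reside in one of $\mathcal{S}_1$'s or $\mathcal{S}_2$'s registers, because both the first hop (into an $\mathcal{S}_1$-register) and the second hop (into an $\mathcal{S}_2$-register) have to keep that value live until it is emitted. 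This bounds the number of ``in-flight'' values by $k_1+k_2$.

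\textbf{Composition: the construction.} The composed transducer $\mathcal{S}$ has $k_1+k_2$ \emph{physical} registers; its finite control stores the current $\mathcal{S}_1$-state, the current $\mathcal{S}_2$-state, a map $\mu$ from the $k_1+k_2$ \emph{logical} registers (the $k_1$ of $\mathcal{S}_1$ and the $k_2$ of $\mathcal{S}_2$) to physical registers, allowing aliasing, and the weak ordering of the $k_1+k_2$ physical values. Initially each logical register owns its own physical register, carrying the corresponding constant of $R_0^{(1)}$ or $R_0^{(2)}$. On reading $(\sigma_i,d_i)$, the transition guard records how $d_i$ compares against all $k_1+k_2$ physical registers, hence (via $\mu$) against $\mathcal{S}_1$'s registers; together with the stored $\mathcal{S}_1$-state this selects an enabled $\mathcal{S}_1$-transition producing an intermediate letter $(\gamma_i,d'_i)$, where $d'_i$ is either $d_i$ (if that transition did $\newv$ on its output register) or the value of the physical aliased to that register. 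Either way the comparison of $d'_i$ against $\mathcal{S}_2$'s registers is known --- from the guard in the first case, from the stored weak ordering in the second --- so an enabled $\mathcal{S}_2$-transition can be chosen, and $\mathcal{S}$ emits $(\theta_i,$ the value of the physical aliased to that transition's output register$)$. All register updates $\mathcal{S}$ performs are $\oldv$ or $\newv$ (loading $d_i$), so $\mathcal{S}$ is \addfree. The delicate point is that when $\mathcal{S}_2$ does $\newv$ into its register $j$ with an ``old'' value $d'_i$ --- one equal to the value currently stored in the physical holding $\mathcal{S}_1$'s output register $u_i$ --- $\mathcal{S}$ moves no data, it merely re-points $\mu(j)$ to that physical; and when $\mathcal{S}_1$ later does $\newv$ on an aliased register, $\mathcal{S}$ relocates that logical register to a currently-free physical, which exists precisely because aliasing means strictly fewer than $k_1+k_2$ physicals are in use. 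The stored weak ordering is maintained throughout, since every $\newv$ loads the current input, whose comparison with all physicals is supplied by the guard, and re-pointing changes no value.

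\textbf{Where the difficulty lies.} The only genuine obstacle is that the add-free model permits loading a register from the input but never from another register, whereas composition visibly needs register-to-register transfer, namely when $\mathcal{S}_2$ copies a value that $\mathcal{S}_1$ produced and has been holding in a register for several steps. Resolving this by pushing all such ``copies'' into the finite-control aliasing map $\mu$, and then invoking the conservation/counting argument to guarantee that $k_1+k_2$ physical registers always leave a free slot when an alias must be broken, is the heart of the proof; correctness in both directions then follows by a routine simulation argument, together with the fact that $\mathcal{S}$, being an \SRT, automatically defines a prefix-closed transformation. I would also note in passing the finite-versus-$\omega$ subtlety in the definition of composition, but it does not affect the construction.
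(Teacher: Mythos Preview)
Your treatment of union and intersection matches the paper exactly. For composition, both you and the paper build a product with $k_1+k_2$ registers and keep ordering information in the finite control, but the mechanisms differ in a meaningful way. The paper keeps the $k_1+k_2$ registers rigidly partitioned into ``$\mathcal{S}_1$'s registers'' and ``$\mathcal{S}_2$'s registers'' and stores a cross-comparison array $\{>,=,<\}^{k_1\cdot k_2}$ (plus a sign vector for $\mathcal{S}_1$'s registers) in the state; it then simply asserts that ``both the registers and the comparison arrays are updated accordingly,'' without spelling out what happens when $\mathcal{S}_2$ executes $\newv$ with an intermediate value $d'_i$ that is sitting in one of $\mathcal{S}_1$'s registers rather than equal to the current input $d_i$. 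You, by contrast, make the register pool \emph{physical} and push the logical-to-physical correspondence into a finite aliasing map $\mu$, so that such an $\mathcal{S}_2$-$\newv$ becomes a mere re-pointing of $\mu$, with the pigeonhole observation (aliasing implies a free physical) guaranteeing that a later de-aliasing $\newv$ always has room. That is precisely the point on which the paper's proof is silent, and your construction resolves it cleanly; it also makes it transparent why the result survives for \daddfree. In short, both arguments share the same skeleton, but your aliasing-map bookkeeping is a genuinely more explicit device for the one nontrivial obstacle, whereas the paper's comparison-array version leaves that step to the reader.
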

\begin{proof}
As the transducers constructed in the proof of Theorem~\ref{thm:srt-closure} do not introduce any new \addv~operation, the same proof applies to \addfree.
We now prove the closure under composition.

Let $\mathcal{S}_1$ and $\mathcal{S}_2$ be two \addfree~ with state sets $Q_1$ and $Q_2$, as well as $k_1$ and $k_2$ registers, respectively. We construct an \addfree~ to define the composition $\semantics{\mathcal{S}_1} \cdot \semantics{\mathcal{S}_2}$. Similar to the \SRT~construction for intersection (see the proof of Theorem~\ref{thm:srt-closure}), the state set is $Q_1 \times Q_2 \times \{>, =, <\}^{k_1 \cdot k_2} \times \{\pos, 0, \negz\}^{k_1}$ and the number of registers is $(k_1 + k_2)$. In addition to the comparison array between registers, the state also keeps track of the register values for $\semantics{\mathcal{S}_1}$ are positive, negative or $0$. 

Each transition sequentially performs a transition $\delta_1$ from $\mathcal{S}_1$ and then a transition $\delta_2$ from $\mathcal{S}_2$. Note that in \addfree, the register values can only be updated with the input data value or $0$. Hence the data value output from $\delta_1$ can be the input data value, one of the registers' value, or $0$. The comparison arrays in the current state allow us to check the guard for $\delta_2$. The composed transition is only allowed when the output of $\delta_1$ satisfies the guard of $\delta_2$. Both the registers and the comparison arrays are updated accordingly in each composed transition. 
\end{proof}

\begin{theorem}
The \AD- and \AU- definable languages are closed under union, intersection and composition.
\end{theorem}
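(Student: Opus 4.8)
The plan is to observe that neither the density of the data domain nor the all-zero initialization of the registers is disturbed by any of the transducer constructions already used to establish the closure of \addfree, so the two statements follow by a direct inspection of those constructions rather than by a fresh argument. Recall that \thmref{addfree-closure} already proved \addfree~ closed under union, intersection and composition, and that the transducers built there --- as well as the ones in the proof of \thmref{srt-closure} --- introduce no \addv~ update; hence everything below automatically stays inside the add-free fragment, and it only remains to track the two extra restrictions defining \AD~ and \AU.

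For \AD: being \dense~ is a property of the linear group \domain, not of the state set, the registers, or the transitions. In each of the three constructions the combined transducer is a $(\Sigma,\Gamma,\domain)$- (resp.\ $(\Sigma,\Theta,\domain)$-) transducer over the \emph{same} \domain~ shared by the inputs --- in particular composition is defined only when $\mathcal{T}_1$ and $\mathcal{T}_2$ use a common data domain. Therefore, if the input transducers are \dense, so is the output, and \AD~ is closed under union, intersection and composition with no change to the constructions.

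For \AU: recall that \uninit~ requires the initial register assignment $R_0$ to be $(0,\dots,0)$. In the union construction the combined transducer has $k_1+k_2$ registers whose initial assignment is the concatenation of the two original initial assignments, and a concatenation of all-zero vectors is all-zero; the same holds for the intersection and composition constructions, where the registers are again exactly the $k_1+k_2$ original ones. The auxiliary bookkeeping those constructions add --- the $\{>,=,<\}^{k_1\cdot k_2}$ array comparing the two register banks, and, for composition, the $\{\pos,0,\negz\}^{k_1}$ sign vector for $\mathcal{S}_1$'s registers --- lives in the finite \emph{state} component, not among the registers, and its initial value is fully determined by the all-zero register values (every register comparison is $=$, every sign is $0$). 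Hence the combined transducer is again \uninit, and \AU~ is closed under the three operations.

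The only point that genuinely needs checking, then, is that the intersection and composition constructions do not secretly depend on a nonzero constant register: the remark in the proof of \thmref{srt-closure} that ``the initial register values are all fixed'' is exactly what makes the register-comparison array initializable, and this is satisfied a fortiori when every initial value is $0$. No other ingredient of those proofs interacts with density or with the specific value of $R_0$, so nothing further is required; the same observation simultaneously yields closure of \AUD. I expect the final write-up to be a short remark invoking \thmref{addfree-closure} rather than a standalone proof, with the main (minor) obstacle being to state precisely why the finite-state bookkeeping is irrelevant to the \uninit~ condition.
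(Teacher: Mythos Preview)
Your proposal is correct and follows essentially the same approach as the paper: the paper's proof is a one-sentence remark that the constructions in Theorems~\ref{thm:srt-closure} and~\ref{thm:addfree-closure} are agnostic to the data domain and do not introduce nonzero register initialization, so they specialize to \AD~ and \AU~ without change. Your write-up is simply a more explicit version of the same observation, in particular spelling out that the comparison array and sign vector live in the state component rather than the registers; this extra care is welcome but not a different argument.
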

\begin{proof}
As the proofs of Theorems~\ref{thm:srt-closure} and~\ref{thm:addfree-closure} are agnostic to the underlying data domain and does not introduce register initialization, the same proof applies to these subclasses.
\end{proof}

\begin{theorem}
\label{thm:dsrt-closure}
The \DSRT-definable languages are closed under intersection, but not closed under union/composition.
\end{theorem}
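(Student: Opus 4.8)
The plan is to treat the three claims separately, in each case leaning on machinery already built for \SRT.

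For closure under intersection, I would reuse the product construction from the proof of Theorem~\ref{thm:srt-closure}: its states are triples $(q_1, q_2, A)$ recording the current state of $\mathcal{S}_1$, the current state of $\mathcal{S}_2$, and the $k_1 \times k_2$ array of linear orderings between the two register blocks, and it uses $k_1 + k_2$ registers. The only new thing to check is that this construction takes \DSRT to \DSRT. Fix a product state $(q_1, q_2, A)$, an input label $\sigma$, and a comparison vector $l \in \{>,=,<\}^{k_1 + k_2}$, and split $l = (l_1, l_2)$ along the two register blocks. Since $\mathcal{S}_1$ is deterministic there is at most one transition of $\mathcal{S}_1$ with signature $(q_1, \sigma, l_1)$, and likewise at most one transition $\delta_2$ of $\mathcal{S}_2$ with signature $(q_2, \sigma, l_2)$; each of $\delta_1, \delta_2$, when it exists, carries a fixed output label. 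Hence at most one product transition is assembled from $(q_1, q_2, A), \sigma, l$ (none, unless the output labels match and the $u_1 = u_2$ compatibility demanded by $A$ holds), so the product is deterministic.

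For the two negative results I would first record the elementary fact that every \DSRT-definable transformation is \emph{functional}: $in(V) = in(V')$ implies $V = V'$. This is because a deterministic \SRT has at most one run over a given finite input word $s$ — by induction on length, the configuration reached after any prefix of $s$ is unique, the comparison vector between the next input value and the registers is then determined, determinism allows at most one enabled transition, and that transition fixes the next configuration and the emitted output letter. Union non-closure is then immediate: let $\mathcal{S}_1$ and $\mathcal{S}_2$ be the trivial one-state, one-register \DSRT that on every input emit $(\gamma_1, 0)$ and $(\gamma_2, 0)$ respectively (register initialized to $0$, update $\oldv$, output register $1$), with $\gamma_1 \neq \gamma_2$; both are deterministic, the union of the two (prefix-closed) transformations is a legitimate transformation, but on input $\langle (\sigma, d) \rangle$ it contains two instances with identical input and distinct outputs, so it is not functional and therefore not \DSRT-definable. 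Composition non-closure reuses the counterexample of Theorem~\ref{thm:srt-comp} verbatim: $\mathcal{S}_{\summ}$ of Example~\ref{ex:sum} is deterministic, yet $\semantics{\mathcal{S}_{\summ}} \cdot \semantics{\mathcal{S}_{\summ}}$ was shown there not to be \SRT-definable, hence \emph{a fortiori} not \DSRT-definable.

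I do not anticipate a genuine obstacle; the content is almost entirely inherited from earlier results together with the functionality observation. The one spot demanding care is the determinism check for the intersection product: one must confirm that decomposing the combined comparison vector along the two register blocks recovers exactly the two component guards used in the Theorem~\ref{thm:srt-closure} construction, so that determinism of $\mathcal{S}_1$ and of $\mathcal{S}_2$ really does transfer to the product. Getting that correspondence precisely right is the main thing to verify.
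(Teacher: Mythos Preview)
Your proposal is correct and follows essentially the same approach as the paper: reuse the intersection product from Theorem~\ref{thm:srt-closure} and observe it preserves determinism, reuse the $\mathcal{S}_{\summ}$ counterexample from Theorem~\ref{thm:srt-comp} for composition, and for union argue that \DSRT-definable transformations are functional while the union of two \DSRT~ with coinciding input and differing output is not. You supply more detail than the paper (the explicit splitting $l = (l_1,l_2)$ for the determinism check, the explicit one-state witnesses for union), but the underlying ideas are identical.
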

\begin{proof}
\DSRT~ is closed under intersection because the intersection \SRT~ constructed in Theorem~\ref{thm:srt-closure} is still deterministic if both component \SRT are deterministic. \DSRT~ is not closed under composition because the counterexample for Theorem~\ref{thm:srt-comp} is deterministic.

The reason \DSRT~ is not closed under union is quite obvious. Consider two \DSRT~ that both accept the same input data word $s$ and produce different output data words $t_1$ and $t_2$. Then the union language must contain $s \otimes t_1$ and $s \times t_2$. This is not possible to be generated by a \DSRT, which only generates a unique output for an input.
\end{proof}

\begin{theorem}
The \daddfree-definable languages are closed under intersection and composition, but not closed under union.
\end{theorem}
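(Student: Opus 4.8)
The plan is to reuse the three constructions already established and observe which structural properties they preserve.

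\textbf{Intersection.} Recall from the proof of Theorem~\ref{thm:dsrt-closure} that the product construction of Theorem~\ref{thm:srt-closure} yields a deterministic transducer whenever both components are deterministic, and recall from the proof of Theorem~\ref{thm:addfree-closure} that the same construction introduces no \addv~operation, so it preserves add-freeness. Combining these two observations, the product of two \daddfree~is again add-free and deterministic, hence a \daddfree, and it defines the intersection. So I would simply cite these two facts and conclude.

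\textbf{Composition.} Start from the composition construction of Theorem~\ref{thm:addfree-closure}, whose composed state carries both component states, the $k_1 \times k_2$ pairwise comparison array, and the sign vector for the registers (which for a deterministic argument I would extend, if needed, to record the signs of $\mathcal{S}_2$'s registers as well). I would then check that determinism is preserved: given the composed state, the input label $\sigma$, and the guard over all $k_1+k_2$ registers, the pair $(\delta_1,\delta_2)$ realizing a composed transition is uniquely determined. Indeed $\delta_1$ is forced by determinism of $\mathcal{S}_1$ from $\mathcal{S}_1$'s state, $\sigma$, and the restriction of the guard to the first $k_1$ registers; the data value emitted by $\delta_1$ is either the input value $d$, some register of $\mathcal{S}_1$, or $0$, and in each case the comparison of that value against every register of $\mathcal{S}_2$ is read off, respectively, from the last $k_2$ components of the guard, from the comparison array, or from the sign vector; hence $\delta_2$ is in turn forced by determinism of $\mathcal{S}_2$. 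Therefore the composed transducer is a \daddfree~defining $\semantics{\mathcal{S}_1} \cdot \semantics{\mathcal{S}_2}$.

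\textbf{Non-closure under union.} This is exactly the argument used for \DSRT~in Theorem~\ref{thm:dsrt-closure}, and it already applies to \daddfree: take two single-state \daddfree~that each copy the input data value to the output on every letter but use two distinct output labels $\gamma_1 \neq \gamma_2$. They agree on the same input data words, so their union contains two transformation instances sharing an input word but with different outputs, which no deterministic transducer---in particular no \daddfree---can define.

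\textbf{Main obstacle.} The only nonroutine point is the determinism check in the composition case: one must make sure the bookkeeping in the composed state (comparison array together with sign information) is rich enough to determine the comparison between the value emitted by $\delta_1$ and every register of $\mathcal{S}_2$ in all three cases, so that $\delta_2$ is genuinely forced; this may require mildly enlarging the sign component of the state relative to the construction in Theorem~\ref{thm:addfree-closure}, but it is a straightforward refinement that adds no \addv~operation and no state nondeterminism.
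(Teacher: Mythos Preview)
Your proposal is correct and follows the same approach as the paper, which simply asserts (without further detail) that the intersection construction from Theorem~\ref{thm:srt-closure} and the composition construction from Theorem~\ref{thm:addfree-closure} preserve both determinism and add-freeness, and defers the union argument to Theorem~\ref{thm:dsrt-closure}. Your treatment of the composition case is in fact more careful than the paper's one-line claim, since you actually spell out why $\delta_2$ is forced once $\delta_1$ is, and correctly flag the possible need to track signs of $\mathcal{S}_2$'s registers as well.
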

\begin{proof}
\daddfree~ is closed under intersection because the intersection \SRT~ constructed in Theorem~\ref{thm:srt-closure} and the composition \SRT~ constructed in Theorem~\ref{thm:addfree-closure} are deterministic and add-free if their components \SRT~ are so.
The arguments in the proof of Theorem~\ref{thm:dsrt-closure} also apply to \daddfree.
\end{proof}

The critical distinction between \SRT~ and \addfree~ is that in \SRT, every register's value is always the sum of a set of data values from the input and the register initialization; while in \addfree, every register's value is just the initial value or the data value from one position of the input. 

\begin{lemma}\label{thm:addfree-reg}
Let $\mathcal{S}$ be an \addfree~ and an $\mathcal{S}$-trail $T$. Every set of natural numbers $P^T_{i,j}$ as defined in Definition~\ref{def:set} is a singleton or empty.
\end{lemma}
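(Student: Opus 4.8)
The plan is to unfold Definition~\ref{def:set} and exploit that in an \addfree~the update alphabet is restricted: the update vectors range over $\{\oldv, \newv\}^k$, so for every position $n$ and every register index $j$ the entry $\Pi_{\update}(T)[n][j]$ is either $\oldv$ or $\newv$, and is never $\addv$. Consequently, for a nonnegative $n$ the first bullet of Definition~\ref{def:set} (``$\Pi_{\update}(T)[n][j]$ is $\newv$ or $\addv$'') collapses to the single requirement $\Pi_{\update}(T)[n][j] = \newv$; that is, $n$ must be a step at which register $j$ is reset to the current input value. This is the only place where the add-freeness hypothesis is used, and it is what makes the difference with the general \SRT~case (where the surviving $\addv$-updates after the last $\newv$ can each contribute an extra element).

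First I would show that at most one nonnegative index can belong to $P^T_{i,j}$. Suppose $n_1 < n_2$ were two such indices. Then $\Pi_{\update}(T)[n_2][j] = \newv$, yet $n_2$ is an index with $n_1 < n_2 \le i$, directly contradicting the second bullet of Definition~\ref{def:set} instantiated at $n_1$. Hence the only possible nonnegative member of $P^T_{i,j}$ is the largest index $n^\star \le i$ with $\Pi_{\update}(T)[n^\star][j] = \newv$ --- the most recent reset of register $j$ up to step $i$ --- and, when this index exists, maximality makes both bullets hold, so indeed $n^\star \in P^T_{i,j}$.

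It then remains to treat the sentinel value $-1$ and to keep it mutually exclusive with $n^\star$. By the remark following Definition~\ref{def:set}, $-1 \in P^T_{i,j}$ precisely when register $j$ is never reset up to step $i$, i.e.\ when $\Pi_{\update}(T)[n][j] \ne \newv$ for all the relevant $n$. If $n^\star$ exists, this condition fails (witnessed by $n^\star$ itself), so $P^T_{i,j} = \{n^\star\}$; if $n^\star$ does not exist, then no nonnegative index qualifies while $-1$ does, so $P^T_{i,j} = \{-1\}$. In either case $P^T_{i,j}$ has exactly one element, hence is a singleton; in particular it is a singleton or empty, establishing the lemma. The argument is elementary bookkeeping once $\addv$ is excluded; the only point needing attention --- and the closest thing to an obstacle --- is verifying that the ``most recent reset'' index and the sentinel $-1$ never both qualify, which is exactly what the second bullet of Definition~\ref{def:set} enforces.
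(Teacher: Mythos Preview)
Your argument is correct and follows the same line as the paper's (one-sentence) proof: add-freeness collapses the first bullet of Definition~\ref{def:set} to $\Pi_{\update}(T)[n][j]=\newv$, so only the most recent reset position can survive. You are in fact more careful than the paper about the sentinel $-1$ --- the paper's proof says the set is ``just empty'' when no reset occurs, whereas you argue (consistently with Lemma~\ref{thm:srt-reg}) that it is $\{-1\}$; either reading satisfies the ``singleton or empty'' conclusion.
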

\begin{proof}
As the \addfree~ disallows additive updates, $P^T_{i,j}$ contains only the most recent position at which the $j$-th register is updated with \newv, if any. Otherwise, $P^T_{i,j}$ is just empty. 
\end{proof}

\section{Expressiveness}

In this section, we discuss streaming transformations that can be defined in monadic-second order logic and first-order logic. We also compare their expressiveness with \SRT~ and its subclasses.

\subsection{Logically Defined Transformations}

We now define \smso~transducers, an \mso-based model to describe streaming transformations. The model is inspired from the \mso~transducers introduced for regular transformations of finite string~\cite{Courcelle1994} or infinite strings~\cite{Alur2012}.

\begin{definition}[\smso]
A streaming MSO transducer (\smso) is a quadruple $(\Sigma, \Gamma, \domain, \phi)$ where
\begin{itemize}
\item $\Sigma$ and $\Gamma$ are the finite sets of input and output labels, respectively;
\item $\domain$ is a linear group;
\item $Q$ is a finite set of states and $q_0 \in Q$ is the initial state;
\item $\phi \in \mso(\Sigma, \Gamma, \domain)$ is a monadic second-order sentence as the transformation condition.
\end{itemize}
\end{definition}
Given an input data word $s \in \Sigma \times D$, an \smso~ guesses an output data word $t$, and checks every prefix of $s \otimes t$ against $\phi$. The formula $\phi$ needs to be satisfied by any prefix of $s \otimes t$, so that every partial output is legal based on the partial input that has been read thus far. Any \smso~$\mathcal{M}$ defines a streaming transformation:
\[\semantics{\mathcal{M}} \defeq \{ V \mid \textrm{for any } i, V[0..i] \models \phi \} \]
We call a transformation \emph{\mso-definable} if there is an \smso~ that defines the transformation. When the formula $\phi$ in the \smso~ is a first-order formula, we also call the transformation \emph{\fo-definable}.

\subsection{Inclusiveness}

We now compare the logically defined transformations with \SRT~ and its subclasses in terms of their expressiveness. We first show \mso~ is more expressive than \SRT.

\begin{theorem}[\SRT~$\subseteq$ \mso]
\label{thm:mso}
\SRT-definable transformations (and all of its subclasses) are \mso-definable.
\end{theorem}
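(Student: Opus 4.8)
The goal is to construct, from any given \SRT~$\mathcal{S} = (Q, q_0, k, R_0, \Delta)$, an equivalent \smso~$(\Sigma, \Gamma, \domain, \phi)$ so that $\semantics{\mathcal{S}} = \semantics{\mathcal{M}}$. The plan is to have $\phi$ existentially guess, for each position, which transition of $\mathcal{S}$ was fired, verify that the guessed transitions form a legal run of the trail automaton $\tilde{\mathcal{S}}$, and then assert that the input/output data values are consistent with what those transitions force them to be, using Lemma~\ref{thm:srt-reg}.

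First I would introduce, for each transition $\delta \in \Delta$, a second-order variable $X_\delta$ ranging over the set of positions at which $\delta$ is fired. The formula $\exists X_{\delta_1} \cdots \exists X_{\delta_{|\Delta|}}. (\cdots)$ then asserts: (a) the $X_\delta$ partition the positions (every position lies in exactly one $X_\delta$); (b) the label of $\delta$ at a position in $X_\delta$ matches $L_\sigma(x)$ of the input and $L_\gamma(x)$ of the output; (c) the state sequence threads correctly --- if position $x$ is in $X_\delta$ with $\delta$ going from state $q$ to $q'$, and $S(x,y)$, then $y$ lies in some $X_{\delta'}$ whose source state is $q'$, and position $0$ lies in some $X_\delta$ with source $q_0$; (d) the guard of $\delta$ is satisfied. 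For (d), at a position $x \in X_\delta$ with comparison vector $l$, each coordinate $l[i]$ (one of $>,=,<$) must be reflected by comparing $dt_{in}(x)$ with the value currently in register $i$. That register value is, by Lemma~\ref{thm:srt-reg}, either $R_0[i]$ (a constant, expressible as a repeated $+$ of generators or simply $0$ in the uninitialized case --- more carefully, constants need care, see below) or a sum $\sum_{e \in P^T_{i,j}} \mathit{dt_{in}}(e)$. The key observation is that the set $P^T_{i,j}$ is \mso-definable from the guessed $X_\delta$'s: "the most recent reset of register $i$ at-or-before $x$, together with all subsequent \addv~positions'' is a first-order-over-$X_\delta$ condition on positions, so I can write a formula $\psi_{i}(x, X)$ defining a set variable $X$ equal to $P^T_{i,\cdot}$ at position $x$, and then use $\mathit{sum\_dt_{in}}(X)$ to refer to the register value. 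Then $l[i] = {<}$ becomes $dt_{in}(x) - \mathit{sum\_dt_{in}}(X) \le 0 \land \lnot(\mathit{sum\_dt_{in}}(X) - dt_{in}(x) \le 0)$, and similarly for the other comparisons; equality is the conjunction of two non-strict inequalities. Finally the output constraint: at $x \in X_\delta$ with output register $u$ and update vector $m$, $dt_{out}(x)$ equals the value of register $u$ \emph{after} the update, which is again a sum of input values over an \mso-definable set (the post-update analogue of $P^T$), so $dt_{out}(x) = \mathit{sum\_dt_{in}}(X')$ is expressible via two inequalities.

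The one genuine wrinkle is the handling of the initial register values $R_0 \in D^k$: these are arbitrary domain constants, and $\mso(\Sigma,\domain)$ has no constant symbols other than $0$. I would handle this by noting that in our construction the run always starts from $R_0$, so a register's value is $R_0[i]$ plus a sum of input values; to express "$dt_{in}(x)$ compared against $R_0[i] + (\text{sum})$'' without a name for $R_0[i]$, I can relativize: a single \SRT~ is a \emph{fixed} machine, so $R_0[i]$ is a fixed element, and I introduce it via a position-independent guessed quantity --- but \mso~over data words as defined here does not allow guessing raw data values. The clean fix is to observe that if no register with a nonzero initial value is ever actually compared or output before being overwritten, the constant never matters; and when it does matter, the comparison $dt_{in}(x) \bowtie R_0[i] + E$ for the \emph{specific} transitions of $\mathcal{S}$ can be pushed into the very first transition's guard, which only sees $R_0$ and $dt_{in}(0)$, turning the infinitely-many later comparisons into comparisons among input values whose relation to $R_0$ was already pinned down at step $0$. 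I expect this bookkeeping around $R_0$ --- making precise that all constant-involving comparisons are determined by finitely much information recorded at the start, hence encodable in the state/transition structure rather than in raw constants --- to be the main obstacle; the rest is a routine, if lengthy, translation. Once $\mathcal{S}$ is handled, the theorem for all subclasses is immediate since they are syntactic restrictions of \SRT, and the same $\phi$ works (indeed simplifies: for \addfree, each $P^T_{i,j}$ is a singleton or empty by Lemma~\ref{thm:addfree-reg}, so $\mathit{sum\_dt_{in}}$ collapses to $dt_{in}$ of one position or $0$).
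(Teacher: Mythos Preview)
Your core construction is the paper's: existentially guess, via second-order position-set variables, the trail of a run (you index by transitions $X_\delta$, the paper by trail-alphabet letters $X_l$ and states $X_q$ --- an equivalent packaging), verify in \mso~ that the guessed word is accepted by the trail automaton $\tilde{\mathcal{S}}$, then invoke Lemma~\ref{thm:srt-reg} together with the \mso-definability of the sets $P^T_{i,j}$ (Definition~\ref{def:set}) to express each register value as $\mathit{sum\_dt_{in}}$ of a definable set, so that guard and output constraints become atomic inequalities $E \le 0$. This matches the paper's proof essentially line for line.

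Where you go beyond the paper is in worrying about the initial register values $R_0$. You are right that Lemma~\ref{thm:srt-reg} contributes an $R_0[j]$ summand whenever $-1 \in P^T_{i,j}$, and that $R_0[j]$ is a domain constant not available in the term language of $\mso(\Sigma,\Gamma,\domain)$ as defined (only $0$ is). The paper's proof simply does not mention this; you noticed it, which is to your credit. However, your proposed fix --- ``push the constant-involving comparisons into the first transition's guard'' --- does not work. The first guard records only whether $dt_{in}(0)$ is ${<},{=},{>}$ each $R_0[j]$; that three-valued datum cannot decide, at a later position $x$, whether $dt_{in}(x) \le R_0[j] + E$ for an arbitrary sum $E$ of input values, since the answer depends on the numerical magnitude of $R_0[j]$. (Take $k=1$, one $\addv$ at step $0$; at step $1$ the guard asks whether $dt_{in}(1) < R_0[1] + dt_{in}(0)$, which for fixed inputs varies with $R_0[1]$ and is not determined by the sign of $dt_{in}(0) - R_0[1]$.) The honest resolution is to allow the finitely many constants $R_0[1],\ldots,R_0[k]$ as additional term symbols --- this is presumably what the paper intends, since already Example~\ref{ex:filtering} requires comparing against a nonzero constant $c$ --- or to restrict to \uninit, where the issue disappears. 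Your closing remarks (subclasses follow immediately; for \addfree~ the sum collapses via Lemma~\ref{thm:addfree-reg}) are correct.
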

\begin{proof}
Let $\mathcal{S} = (Q, q_0, k, R_0, \Delta)$ be an \SRT. We build an \smso $\mathcal{M}$ such that $\semantics{\mathcal{M}} = \semantics{\mathcal{S}}$. In other words, we show how to construct an \mso~formula $\phi$ such that for any transformation instance $s \otimes t$ ($s$ is the input and $t$ is the output), $s \otimes t \in \semantics{\mathcal{S}}$ if and only if $s \otimes t[0..i] \models \phi$ for any natural number $i$.

Essentially, $\phi$ guesses a run $\rho$ that produces $s \otimes t[0..i]$:
\[
(q_0, R_0) \xrightarrow[{t[0]}]{s[0]} (q_1, R_1) \xrightarrow[{t[1]}]{s[1]} \dots \xrightarrow[{t[i]}]{s[i]} (q_i, R_i)
\]
More concretely, $\phi$ first guesses the $\mathcal{S}$-trail $T$ that corresponds to the run. Note that the $\mathcal{S}$-trails can be recognized by the trail automaton $\tilde{\mathcal{S}}$ and due to the classical logic-automata connection, the guess can be done by existentially quantifying over a set of second-order variables: 
\begin{itemize}
\item for each label  $l$ in the alphabet of $\tilde{\mathcal{S}}$, a variable $X_l$ for the set of positions on which the trail is labeled $l$;
\item for each state $q$ in $\tilde{\mathcal{S}}$, a variable $X_q$ for the set of positions on which $\tilde{\mathcal{S}}$ runs to state $q$.
\end{itemize}
With the guessed trail, by Lemma~\ref{thm:srt-reg}, every register's value at every step can be expressed in terms of $s$ and the family of sets $P^T_{i,j}$. 
Moreover, every set $P^T_{i,j}$ can be defined in \mso~ according to Definition~\ref{def:set}.
Therefore $\phi$ can check, for each transition step, the three conditions described in Section~\ref{sec:config}, to make sure the step is enabled by the transition determined by the $X_l$'s and $X_q$'s.
\end{proof}

One may wonder if $\SRT$-transformations are $\fo$-definable or vice versa. The following two theorems show that the two classes of transformations are not comparable.

\begin{theorem}[\fo~$\not \subseteq$ \SRT]
Not all \fo-definable transformations are \SRT-definable.
\end{theorem}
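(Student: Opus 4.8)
The goal is to exhibit a single \fo-definable transformation that no \SRT\ can capture. The natural obstruction for \SRT\ is that every register value is, by Lemma~\ref{thm:srt-reg}, a sum of a \emph{bounded} number of input data values chosen along one of finitely many ``reset patterns'' (one per register, per trail). So to defeat \SRT\ I want a transformation whose output at position $i$ is forced to depend on \emph{unboundedly many} past inputs in a way that cannot be realized as such a running sum --- yet which is easy to state with a first-order \mso\ sentence using the $\leq 0$ comparison and the $sum\_dt$ aggregate over an \fo-definable set.

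\textbf{Candidate transformation.} I would take the transformation $\mathcal{T}_{\mathrm mono}$ over, say, $\domain = (\mathbb{Z},\leq,+)$ with a single trivial input label and two output labels $\{\gamma,\#\}$, defined by: the instance $s\otimes t$ is legal iff at every position $i$, the output is $(\gamma, 0)$ if the input data values $dt_{in}(0),\dots,dt_{in}(i)$ are \emph{strictly increasing} and $(\#,0)$ otherwise. This is plainly \fo-definable: the monotonicity-up-to-$i$ predicate is $\forall x \forall y.\, (S(x,y) \wedge y \le x_{\text{current}}) \Rightarrow dt_{in}(x) - dt_{in}(y) \le 0 \wedge \neg(dt_{in}(y) - dt_{in}(x) \le 0)$, i.e. a first-order sentence on the prefix; combined with the label constraints $L_\gamma$ / $L_\#$ it gives the required $\phi \in \fo(\Sigma,\Gamma,\domain)$. (If the sum aggregate is wanted to make the example look more like ``filtering/aggregation'', one can instead let the $\gamma$-output data value be $sum\_dt_{in}$ of the whole prefix whenever it is monotone; either variant works, and I would pick whichever makes the \SRT\ impossibility cleanest.)

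\textbf{Why no \SRT\ defines it.} Suppose $\mathcal{S}$ with $k$ registers and state set $Q$ defines $\mathcal{T}_{\mathrm mono}$. A run of $\mathcal{S}$ over a monotone prefix must output $\gamma$ at every position, and over a prefix that first breaks monotonicity at position $i_0$ must output $\#$ at $i_0$; moreover any run that reaches position $i_0-1$ outputting $\gamma$'s must, reading the ``bad'' next letter, be forced into $\#$. The key point is a pigeonhole/Ramsey argument: whether $\mathcal{S}$ switches to $\#$ at step $i$ depends only on (i) the current state in $Q$, (ii) the comparison vector $l \in \{>,=,<\}^k$ between the new input $d$ and the $k$ registers, which by Lemma~\ref{thm:srt-reg} are sums of boundedly many earlier inputs, and (iii) the weak ordering $W$ of the registers (finitely many, via the trail automaton). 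Since $\max(0,i)+1$ past values but only $O(k)$ register ``slots'' are available, for a long monotone prefix the comparison of a fresh $d$ against the registers cannot distinguish whether $d$ continues the strictly-increasing pattern or merely exceeds all stored sums without being larger than the \emph{immediately preceding} input. Concretely I would: choose a monotone prefix long enough that two positions $i<j$ yield the same state and the same register weak-ordering and, after subtracting, the same comparison behavior against all registers for the relevant test values; then build two extensions --- one monotone, one that dips below $dt_{in}(j)$ but stays above every register value at step $j$ --- on which $\mathcal{S}$ is forced to make the same move (either both $\gamma$ or both $\#$), contradicting that exactly one of the two extended prefixes is in $\mathcal{T}_{\mathrm mono}$.

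\textbf{Main obstacle.} The delicate part is making the ``$\mathcal{S}$ cannot tell the two extensions apart'' step fully rigorous: I must ensure I can pick the dip value $d'$ so that simultaneously $d' < dt_{in}(j)$ (breaking monotonicity) while $d'$ compares to \emph{every} register exactly as the genuine monotone continuation would. Because registers are sums (possibly with negative summands coming from the inverse operation never actually appearing in updates --- updates only add the current $d$, so over a prefix of positive increasing inputs all register values are sums of \emph{positive} inputs, hence monotone-ish), I would engineer the monotone prefix so the register values stay in a narrow band well below $dt_{in}(j)$, leaving a whole interval of admissible $d'$ with $d'$ above all registers but below $dt_{in}(j)$; that interval is nonempty precisely because, along a sufficiently sparse strictly-increasing prefix, the gap $dt_{in}(j) - \max_m R[m]$ can be made arbitrarily large, and discreteness of $\mathbb{Z}$ guarantees an integer in it. I expect the rest (pigeonhole on $|Q|\cdot|\mathcal{W}_k|$, verifying the \fo\ formula) to be routine.
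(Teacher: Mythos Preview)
Your candidate transformation $\mathcal{T}_{\mathrm{mono}}$ is \SRT-definable, so the whole argument collapses. A single-register \SRT\ suffices: use an initial state that on the first letter stores $d$ into the register via $\newv$, outputs $(\gamma,0)$, and moves to a ``monotone'' state; from that state, if the comparison with the register is $>$, output $(\gamma,0)$, update the register with $\newv$, and stay; if the comparison is $=$ or $<$, output $(\#,0)$ and move to an absorbing ``broken'' state that outputs $(\#,0)$ forever. This is a deterministic \addfree\ and it realizes exactly $\mathcal{T}_{\mathrm{mono}}$.

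The place your impossibility sketch fails is the sentence ``engineer the monotone prefix so the register values stay in a narrow band well below $dt_{in}(j)$.'' You do not get to choose how $\mathcal{S}$ updates its registers; you only get to choose the input. The $\newv$ update lets the transducer set a register \emph{equal} to the current input, so for the one-register machine above the register after step $j$ is literally $dt_{in}(j)$, and any ``dip'' value $d' < dt_{in}(j)$ yields comparison $<$ while any monotone continuation yields $>$. There is no choice of prefix that forces a gap between the last input and all registers simultaneously for every \SRT. More generally, checking strict increase is a \emph{local} property of consecutive letters, and \SRT\ are designed precisely to handle such bounded-memory comparisons.

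The paper instead uses the \emph{freshness} transformation: output label $a$ at position $x$ iff $dt_{in}(x)$ differs from all earlier data values. This is \fo-definable by $\forall x.\big((\forall y.\, y \prec x \rightarrow dt_{in}(y) \neq dt_{in}(x)) \leftrightarrow L_a(x)\big)$, and it is not \SRT-definable because deciding freshness against an unbounded set of past values genuinely requires unbounded memory; $k$ registers cannot separate ``equal to one of the arbitrarily many past inputs'' from ``distinct from all of them.'' The conceptual difference from your attempt is that freshness depends on the \emph{entire} multiset of past values, whereas monotonicity depends only on the single most recent one.
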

\begin{proof}
It suffices to show an \fo-formula such that the transformation it defines cannot be defined by any \SRT. Consider the following simple \fo-formula:
\[
\phi_{\textrm{fresh}} \defeq  \forall x. \Big( \forall y. (y \prec x \rightarrow dt_{in}(y) \neq dt_{in}(x)) \leftrightarrow L_a(x)  \Big)
\]
The formula states that the transformation should give a label $a$ to the output if and only if the current input is a fresh new data value not seen before. It is easy to argue that this transformation cannot be defined by any \SRT: to determine the output label should be $a$ or not, the transducer must memoize all previous data values, which is impossible with any finite number of registers. 
\end{proof}

\begin{theorem}[\addfree~$\not \subseteq$ \fo]
Not all \addfree-definable transformations are $\fo$-definable.
\end{theorem}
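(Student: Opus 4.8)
The plan is to exhibit a single \addfree~witness and show its transformation is not \fo-definable. I would take the high-pass filter $\mathcal{S}_{\filtering}$ of Example~\ref{ex:filtering} --- which is add-free --- instantiated with a \emph{nonzero} threshold, e.g. $\domain = (\mathbb{Z}, \le, +)$ and $c = 1$ (any linear group works, provided $c \ne 0$ and, in the discrete case, $\{d : d > c\} \ne \{d : d \ge 0\}$). The intuition is that an \fo~formula cannot name the constant $c$: the term syntax of $\mathsf{FO}(\Sigma,\Gamma,\domain)$ offers only the constant $0$, the selectors $dt_{in}(x)$ and $dt_{out}(x)$, and $\pm$, while $sum\_dt$ requires a second-order variable and so is unavailable once set quantifiers are banned.

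First I would restrict attention to transformation instances of length one. Under the prefix-closure semantics of an \smso, a one-letter instance $V$ lies in $\semantics{\mathcal{M}}$ iff $V \models \phi$, since the only prefix of $V$ of the form $V[0..i]$ is $V$ itself. Hence if some \smso~$\mathcal{M} = (\Sigma,\Gamma,\domain,\phi)$ satisfied $\semantics{\mathcal{M}} = \semantics{\mathcal{S}_{\filtering}}$, then $\phi$ restricted to single-position structures would define exactly the length-one members of $\semantics{\mathcal{S}_{\filtering}}$; fixing the input label to some $\sigma_0 \in \Sigma$ and the output label to $\gamma$, this forces $\{d \in D : (\sigma_0, d) \otimes (\gamma, d) \models \phi\} = \{d \in D : d > c\}$.

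Next I would evaluate $\phi$ on a one-element structure. All quantifiers collapse to the unique position $x$, $S(x,y)$ is false, $x = y$ is true, and on $(\sigma_0, d) \otimes (\gamma, d)$ every $\mathsf{FO}$-term evaluates to $m \cdot d$ for some integer $m$, because $dt_{in}(x) = dt_{out}(x) = d$ and the only remaining term is $0$. Thus each atom $E \le 0$ becomes one of ``$d \le 0$'', ``$d \ge 0$'', or ``$\mathsf{true}$'', and $\phi$ reduces to a Boolean combination of $\{d \le 0\}$ and $\{d \ge 0\}$; so $\{d : (\sigma_0, d) \otimes (\gamma, d) \models \phi\}$ lies in the eight-element Boolean algebra $\{\emptyset, \{d < 0\}, \{d = 0\}, \{d > 0\}, \{d \le 0\}, \{d \ge 0\}, \{d \ne 0\}, D\}$. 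For the chosen $c$ the set $\{d : d > c\}$ is not a member of this algebra, contradicting the previous paragraph; I would conclude that $\semantics{\mathcal{S}_{\filtering}}$ is \addfree-definable but not \fo-definable.

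The main obstacle is the bookkeeping of the third step: carefully pinning down exactly which subsets of $D$ an \fo~sentence can cut out on a singleton structure, and arguing cleanly that $c$ cannot be reconstructed from the rest of the structure. The length-one reduction is what keeps this tractable --- on longer instances one would instead have to rule out \fo~formulas that try to bootstrap a comparison against $c$ across many positions. If one prefers a domain-independent (and even uninitialized) witness, an alternative is a deterministic \AU~whose output at position $i$ echoes a fixed label except at specially marked input positions, where it emits one of two labels according to the parity of the index; a standard Ehrenfeucht--Fra\"iss\'e argument over finite linear orders then yields non-\fo-definability, because \fo~cannot distinguish a linear order of length $n$ from one of length $n+1$ once $n$ is large relative to the quantifier rank.
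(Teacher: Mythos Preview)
Your proof is correct and takes a genuinely different route from the paper's. The paper constructs a \emph{register-free}, deterministic \addfree~$\mathcal{S}_{\neg A}$ that simulates an arbitrary DFA $A$ (producing output as long as $A$ has not accepted, and halting once it does), and then argues that if every \addfree-transformation were \fo-definable, one could convert any \mso~sentence over finite $\Sigma$-words into an equivalent \fo~sentence via $A(\psi) \mapsto \neg\phi_{\neg A(\psi)}$ --- contradicting the classical \mso/\fo~gap for regular languages. Your argument instead exploits the \emph{data} side: the only constant available to \fo~terms is $0$, so on a singleton structure every atom $E \le 0$ collapses to a sign test on $d$, and $\{d > c\}$ with $c \ne 0$ falls outside the resulting eight-element Boolean algebra. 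Your approach is more self-contained (no appeal to the B\"uchi/McNaughton--Papert separation) and the singleton reduction keeps the bookkeeping pleasantly finite; but it hinges on a nonzero initial register value, so it does not directly cover \AU~or \AUD. The paper's register-free witness handles all subclasses uniformly, which is why it can remark immediately afterward that the result extends to \AU, \AD, \AUD, and \daddfree. Your closing alternative --- a parity-based \AU~with an Ehrenfeucht--Fra\"iss\'e argument --- is essentially the paper's idea instantiated at a specific non-\fo~regular property.
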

\begin{proof}
Note that \addfree~ can simulate the complementation of any regular language over finite alphabet. Let $A$ be a deterministic finite automaton that recognizes a language over $\Sigma$. Then we can construct an \addfree~
\[
\mathcal{S}_{\neg A} = (\Sigma, \{T\}, \domain, Q_A, q_0^A, 0, \emptyset, \Delta_A)
\]
where $Q_A$ is the set of states for $A$ and $q_0^A$ is the initial state for $A$. $\mathcal{S}_{\neg A}$ does not have any register and the transition set $\Delta_A$ simply mimics the deterministic transitions of $A$, and outputs the unique label $T$ if the current state is \emph{not} an accepting state of $A$. Once an accepting state is reached, no transition will be available and no more input will be accepted.

Now assume \addfree~$\subseteq$ \fo, then the transformation $\semantics{\mathcal{S}_{\neg A}}$ can be defined in \fo~ as a formula $\phi_{\neg A}$ such that 
$\neg \phi_{\neg A}$ is satisfied if and only if the word is accepted by $A$.
 Nonetheless, due to the classical equivalence between finite automata and \mso, arbitrary \mso~formula $\psi$ on finite word can be converted to an automaton $A(\psi)$, which can be further converted to $\phi_{A(\psi)}$. In other words, every \mso-formula can be converted to an equivalent \fo-formula, which is obviously wrong.
\end{proof}

The non-\fo-definability also holds subclasses of \addfree, e.g., \AU, \AD, \AUD, and \daddfree, as the \addfree~ constructed above does not have any registers and is deterministic.

In conclusion, \SRT~ is strictly less expressive than \mso~ and not comparable with \fo. Precise logical characterization of \SRT~ is still unknown and we pose it as an open problem.

\section{Decision Problems}

In this section, we investigate several decision problems about \SRT. 

\subsection{Functionality}

We first consider the functionality check problem. The transformations as defined in Definition~\ref{def:transformation} are sets of input-output pairs. We call a transformation functional if it is a partial function from input data words to output data words.
\begin{definition}[Functionality]
\label{def:functionality}
A $(\Sigma, \Gamma, D)$-streaming transformation $\mathcal{T}$ is \emph{functional} if for any input data word $s \in (\Sigma \times D)^*$, the set of possible output $\{ t \mid s \otimes t \in \mathcal{T} \}$ has cardinality at most $1$. We call an \SRT~$\mathcal{S}$ \emph{functional} if $\semantics{\mathcal{S}}$ is functional.
\end{definition}

Notice that \DSRT~ are deterministic and the output, if any, is unique for a given input. Hence they are always functional. However, \SRT~ are in general nondeterministic and unnecessarily functional. 
The functionality problem is actually undecdabie for \SRT.

\begin{theorem}
\label{thm:srt-functionality}
The functionality of \SRT~ is undecidable.
\end{theorem}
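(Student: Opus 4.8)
The plan is to reduce from a known undecidable problem about machines with counters or about two-register transducers, most naturally Post's Correspondence Problem (PCP) or the halting problem for two-counter Minsky machines, encoded via the additive update capability of \SRT. The key observation driving the reduction is that \SRT\ with \addv-updates can, as shown by $\mathcal{S}_{\summ}$ in Example~\ref{ex:sum}, accumulate unbounded sums of input data values into a register, and the comparison operators $\{>,=,<\}$ let the transducer test such a sum against $0$ or against another register. This is enough to encode counter machines: a register holding a nonnegative integer simulates a counter, an \addv-update by a controlled input value simulates increment/decrement, and a comparison against a register pinned to $0$ simulates the zero-test.

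First I would set up the encoding so that functionality fails exactly when the simulated machine halts (equivalently, when a PCP instance has a solution). Concretely, I would build an \SRT\ $\mathcal{S}_M$ for a given two-counter machine $M$ that reads an input data word whose data values are constrained (by guards) to be drawn from a small controlled set --- e.g.\ the values $1$, $-1$, and $0$ --- so that the input word essentially spells out a candidate run of $M$. Nondeterminism is used to \emph{guess} the sequence of instructions of $M$; the guards and additive updates then \emph{verify} that the guessed sequence is a legal computation with correctly maintained counter values, using one register per counter plus a register fixed at $0$ for zero-tests. On any input that does not correspond to a real halting computation, the transducer has a unique (or no) output, but on an input that does encode a halting run I would arrange that two distinct runs become possible --- e.g.\ by allowing, upon reaching the halt instruction, a nondeterministic choice between outputting two different labels or two different register values --- so that $\semantics{\mathcal{S}_M}$ is non-functional iff $M$ halts. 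Since halting of two-counter machines is undecidable, functionality of \SRT\ is undecidable.

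The main obstacle I anticipate is making the ``verify a computation of $M$'' part actually \emph{sound and complete} within the restricted arithmetic that \SRT\ offer: an \SRT\ cannot freely set a register to a chosen constant other than via \newv\ (copying the current input value) or \addv, it has no subtraction except through the group inverse of the \emph{input} value, and the comparison against $0$ only works if $0$ sits in some register (which is fine, since $R_0$ can contain $0$). So decrement has to be simulated by adding $-1$, which requires the input letter at that step to carry the data value $-1$; I must ensure via guards that the environment cannot cheat by supplying a different value, and that the transducer faithfully detects an attempted decrement-below-zero. A secondary subtlety is that the transducer reads an \emph{infinite} input word but a halting run is finite; I would handle this by letting $\mathcal{S}_M$ enter a trivial self-loop (or simply die, using Lemma~\ref{thm:trail}) after the simulated halt, and by making the non-functional fork happen at the halt step itself so the infinite tail is irrelevant. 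Once the encoding is pinned down, the correctness argument is a routine induction (paralleling Lemma~\ref{thm:srt-reg}) showing the register contents track the counter values along any run whose trail passes the guards, and the equivalence ``$M$ halts $\iff$ $\semantics{\mathcal{S}_M}$ is non-functional'' follows.
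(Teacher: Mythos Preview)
Your proposal is correct and follows essentially the same route as the paper: a reduction from the halting problem for two-counter (Minsky) machines, using \addv-updates to maintain counter values in registers, comparisons against constant registers to enforce that the input supplies the values $1$, $-1$, $0$ and to implement zero-tests, and a nondeterministic fork in the output upon reaching the halt state so that $\mathcal{S}_M$ is non-functional iff $M$ halts. The paper's construction is slightly more concrete (five registers: two for the counters, three pinned to $0$, $1$, $-1$; deterministic simulation rather than guess-and-verify), but the idea and the correctness argument are the same.
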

\begin{proof}
We show a reduction from the halting problem of 2-counter machines~\cite{Minsky1967}. Given a 2-counter machine $\mathcal{M}$, we build an \SRT~$\mathcal{S}$ with integers as the data domain, which simulates the execution of $\mathcal{M}$. The finite control of $\mathcal{M}$ can be encoded as a finite set of states in $\mathcal{S}$. $\mathcal{S}$ is also equipped with $5$ registers: two of them are mutable and store the current values of the two counters, the other three are registers with constants $0$, $1$, and $-1$. Each step of the execution of $\mathcal{M}$ is simulated as below. For counter increment (resp. decrement), $\mathcal{S}$ makes sure the next letter has data value $1$ (resp. $-1$) by comparing with the constant register, and add the value to the corresponding register. For testing whether a counter is zero, $\mathcal{S}$ makes sure the next letter has data value is equal to the corresponding register's value and also equal to the $0$-register. Similarly, equality test between two counters is simulated by comparing the current input with two mutable register values. The whole simulation produces a fixed output for every step. Once the simulation terminates, $\mathcal{S}$ jumps to a special state which nondeterministically emits any output. Therefore, $\mathcal{M}$ halts if and only if $\mathcal{S}$ is not functional.
\end{proof}

Nonetheless, the functionality of \addfree~ (and some subclasses) is decidable. The key idea is to show that any non-functional \addfree~ can be confirmed by a bounded-size witness.

Let $\mathcal{S} = (Q, q_0, k, R_0, \Delta)$ be a $(\Sigma, \Gamma, \domain)$-\addfree. If $\mathcal{S}$ is not functional, there must be a finite string $s \in (\Sigma \times D)^*$ as witness input and two finite strings $t, p \in (\Gamma \times D)^*$ as witness outputs, such that $t \neq p$ and $\{ s \otimes t, s \otimes p \} \in \semantics{\mathcal{S}}$.
There are two runs $\rho_t$ and $\rho_p$ that produce $s \otimes t$ and $s \otimes p$, respectively. Their corresponding trails, say $T_t$ and $T_p$, must be $\mathcal{S}$-trails.
Moreover, the data values from the input $s$ should satisfy a set of constraints $\Phi(z_0, \dots, z_{n-1})$ where each $z_i$ is the data value of $s[i]$. The formula is a conjunction of equalities and inequalities that check three conditions:
\begin{enumerate}
\item every transition step in $\rho_t$ and $\rho_p$ is enabled; this part is fully determined by the trails $T_t$ and $T_p$ (see the proof of Theorem~\ref{thm:mso}); 
\item the register values at every step satisfy the ordering encoded in the current state $q^t_i$ (or $q^p_i$);
\item the two outputs $t$ and $p$ are different only at the last position. 
\end{enumerate}

\paragraph{Encoding to Strand logic.}
Recall that by Lemma~\ref{thm:addfree-reg}, every register $j$'s value can be represented as $0$ or an input data value $z_i$, whose position is determined by an \mso~formula (which computes $P^t_(i,j)$ or $P^p_(i.j)$).
Our first approach is to convert $\Phi$ to a formula in the \Strand~logic, a logic interpreted over tree-like data structures~\cite{popl11,sas11}. 
Intuitively, \Strand~formulae are interpreted over trees and are of the form $\exists \vec{x} \forall \vec{y} . \phi(\vec{x}, \vec{y})$, where $\vec{x}$ and $\vec{y}$ are groups of variables over positions, $\phi(\vec{x}, \vec{y})$ is a boolean combination of unary and binary MSO-defined predicates, as well as arithmetic constraints about data values stored in $\vec{x}$ and $\vec{y}$. 

\begin{lemma}
The functionality of \addfree~ can be reduced to the satisfiability of \Strand~logic.
\end{lemma}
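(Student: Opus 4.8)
The plan is to turn the discussion preceding the lemma into a precise, effective reduction. First I would fix a normal form for a non-functionality witness: if $\mathcal{S}$ is not functional there are a finite input $s$ of some length $n$ and two runs $\rho_t,\rho_p$ of $\mathcal{S}$ over $s$ whose output words agree on positions $0,\dots,n-2$ and differ at position $n-1$; truncating both runs at the first position where the outputs differ is legitimate since a prefix of a run is again a run, and two distinct outputs over the same input have the same length (the model is letter-to-letter). Each run has a corresponding $\mathcal{S}$-trail, $T_t$ and $T_p$, both of length $n$, both accepted by the trail automaton $\tilde{\mathcal{S}}$ of Definition~\ref{def:trail-automaton}, and sharing the same sequence of input labels. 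Conversely, by Lemma~\ref{thm:srt-reg} and Lemma~\ref{thm:addfree-reg} together with the transition-step semantics of Section~\ref{sec:config}, a pair $(T_t,T_p)$ of such trails plus a data sequence $z_0,\dots,z_{n-1}$ induces genuine runs of $\mathcal{S}$ with the required outputs precisely when: at every step $i$ and register $j$, $z_i$ stands to the register content $R_i[j]$ in the relation prescribed by the trail's comparison vector, where $R_i[j]$ is the constant $R_0[j]$ if register $j$ has not yet been reset and is $z_e$ for the unique $e\in P^{T}_{i,j}$ (Definition~\ref{def:set}) otherwise; and the output-register contents of the two runs coincide for $i<n-1$ and differ for $i=n-1$. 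Hence $\mathcal{S}$ is non-functional iff such a trail pair and data sequence exist.

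Next I would encode this existential statement as a \Strand sentence $\Psi_{\mathcal{S}}$, effectively constructible from $\mathcal{S}$. The underlying structures are finite lists of data-carrying nodes, the $i$-th node colored by an element of the finite alphabet $(\Sigma\times\Theta_k\times\Gamma)^2$ (recording $(T_t[i],T_p[i])$) and holding one datum (the input value $z_i$); I restrict attention to the \mso-definable subclass of such lists whose coloring forms a valid $\mathcal{S}$-trail pair --- each projection is accepted by $\tilde{\mathcal{S}}$, which is \mso-expressible since $\mathcal{S}$-trails are the regular language of $\tilde{\mathcal{S}}$; the two input-label components agree everywhere; and the two output-label components agree at every non-last position. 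Over this class, the position predicates used by $\Psi_{\mathcal{S}}$ are all \mso-definable on the list: list successor and ``$x$ is the last node''; and, for each register $j$, a binary predicate $\mathit{rst}^{t}_{j}(x,y)$ (and $\mathit{rst}^{p}_{j}$) expressing that $y$ is the most recent position $\le x$ at which register $j$ is set by a \newv~update in $T_t$ --- exactly the set $P^{T_t}_{x,j}$, which Lemma~\ref{thm:addfree-reg} makes a singleton or empty --- together with its unary companion ``register $j$ of $T_t$ is never reset up to $x$''. Since the output-register index at a node is read off its color, a finite disjunction over $j$ turns these into predicates locating the node that currently holds the output value, if any. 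The arithmetic part of $\Psi_{\mathcal{S}}$ then consists of: for every node $y$ and register $j$, the constraint that $dt(y)$ relates to $dt(e)$ whenever $\mathit{rst}^{t}_{j}(y,e)$ (and likewise for $T_p$), or to the constant $R_0[j]$ if $y$ has no reset for $j$, exactly as the comparison bit in $y$'s color dictates; the analogous equation of the two output values at every non-last node; and an existential conjunct asserting that at the last node the two outputs differ --- in label (color-determined) or in data value, the latter unfolded into the finitely many sub-cases according to which output registers were last reset. Pushing the lone existential and the reset-witness variables to the front leaves $\Psi_{\mathcal{S}}$ in the $\exists\vec{x}\,\forall\vec{y}.\,\phi$ form required of a \Strand sentence, with all arithmetic over data stored in $\vec{x},\vec{y}$ or constants.

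Finally I would prove that $\mathcal{S}$ is not functional iff $\Psi_{\mathcal{S}}$ is satisfiable. The forward direction is routine: the normal-form witness yields the colored data list, and each conjunct holds by construction together with Lemma~\ref{thm:srt-reg}/Lemma~\ref{thm:addfree-reg}. For the backward direction, a model of $\Psi_{\mathcal{S}}$ supplies $T_t,T_p$ (the colors) and $z_0,\dots,z_{n-1}$ (the data), and one checks that the induced transition sequences are genuine runs of $\mathcal{S}$ producing two outputs differing only at the last position. I expect this backward step to be the main obstacle: one must argue that constraining each $z_i$ only against the individual register contents --- i.e., only via the comparison-bit conjuncts --- is enough, given that the colors already form an accepted trail pair; the key point is that the weak-ordering component carried inside $\tilde{\mathcal{S}}$ is, by its defining guard-compatibility and update conditions, forced to coincide step by step with the ordering actually induced by $z_0,\dots,z_{n-1}$, so no extra inter-register constraints are needed. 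A secondary technical point is that the initial constants $R_0[j]$ enter as constant data terms, which is available for the standard linear groups ($\mathbb{Z}$, $\mathbb{Q}$, $\mathbb{R}$ and their lexicographic products) on which \Strand's arithmetic is defined; combined with decidability of \Strand satisfiability, this also yields decidability of functionality for \addfree.
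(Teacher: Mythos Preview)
Your reduction is essentially the paper's: encode a non-functionality witness as a pair of $\mathcal{S}$-trails carried on a data structure, and express the constraint set $\Phi$ as a \Strand formula using \mso-defined binary predicates that locate the last-reset positions $P^T_{i,j}$ (your $\mathit{rst}^t_j$, $\mathit{rst}^p_j$). Your write-up is considerably more thorough than the paper's sketch, and the correctness argument you outline --- in particular the point that the weak-ordering component tracked by the trail automaton is forced to coincide with the ordering induced by the data, so no separate inter-register constraints are needed --- is sound. One representational difference: the paper encodes a trail as a binary tree, with the leftmost spine carrying the positions and each node's finite label encoded by the length of the right spine from that node, whereas you assume colored lists directly. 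That is cosmetic, but the tree encoding is how the paper makes the structure fit \Strand's tree models; you should check that node colors are actually available in \Strand as cited, or adopt the same trick.

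There is, however, a factual error in your closing sentence: \Strand satisfiability is \emph{not} decidable in general. The paper states this explicitly immediately after the lemma and further observes that the formula produced by this very reduction does not lie in the known syntactic decidable fragment, because the \mso predicate defining $P^T_{i,j}$ is not elastic --- contracting the underlying word does not preserve the last-reset relation. So the lemma establishes only a reduction, not a decision procedure; decidability of \addfree functionality is obtained later by an entirely separate constraint-graph argument bounding the length of witness trails. Drop the decidability claim.
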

\begin{proof}
First, we encode all $\mathcal{S}$-trails to a class of binary trees. For example, the trail can be encoded as the leftmost path from root to leaf. Every distinct label on a trail node is encoded as a distinct length of the rightmost path starting from this node. Then the formula $\Phi$ can be easily converted to a \Strand~formula: every MSO-defined $P^t_(i,j)$ or $P^p_(i,j)$ is allowed in \Strand, just replacing the successor predicate $S(x, y)$ with the left-child predicate $Left(x, y)$. 
\end{proof}
While the satisfiability of \Strand~ is not decidable in general, it admits several decidable fragments. Nonetheless, the constructed \Strand~formula does not belong to the syntactic decidable fragment identified in~\cite{sas11}. The reason is that the \mso~predicate for $P^t_(i,j)$ is not \emph{elastic}, because, intuitively, if a trail is contracted, the $P^t_(i,j)$ set is no longer preserved---after the contraction, a register's value for a position may be set from different positions.

In~\cite{popl11}, there is a more powerful, semantically defined decidable fragment: there is an algorithm to check if a \Strand~formula belongs to the fragment. This decision procedure may be used to solve the functionality problem when the constructed \Strand~formula falls in the fragment. However, it is still open if the functionality of \addfree~ can be reduced to the decidable fragment. We next illustrate a different way of encoding the functionality problem, leading to the decidability results.

\paragraph{Encoding to Constraint graph.}
Notice that $\Phi(z_0, \dots, z_{n-1})$ is just a conjunction of inequalities of the form 
\[z_i \sim z_j \quad \textrm{or} \quad z_i \sim c \]
 where $\sim$ is $>$, $=$ or $<$, and $c$ is a constant of the data domain $D$.
 In other words, all the constraints in $\Phi$ are distance constraints and can be converted to a constraint graph~\cite{CLRS}: every variable or constant becomes a vertex, every inequality $z_i < z_j$ becomes an edge from $z_j$ to $z_i$. The weight of the edge is $-\epsilon$ (where $\epsilon$ is a positive infinitesimal) if $\domain$ is dense or $-1$ if $\domain$ is discrete with the least positive element $1$, and every equality $z_i = z_j$ becomes a bidirected edges, both directions having weight $0$.
For constraints involving a constant, we introduce a special variable $zero$, rewrite every $z_i \sim c$ to $z_i - zero \sim c$ and add an appropriate edge between $z_i$ and $zero$. For example, $z_i < c$ becomes a an edge from $zero$ to $z_i$ with weight $c - \epsilon$.
$\Phi'$ is satisfiable if and only if there is no negative-weight cycle in the corresponding constraint graph~\cite{CLRS}.
We denote the constraint graph built from trails $T_t$ and $T_p$ as $\cg(T_t, T_p)$.

\begin{lemma}
\label{thm:graph}
An \SRT~$\mathcal{S}$ is not functional if and only if there exist two witness $\mathcal{S}$-trails $T_t$ and $T_p$ such that $\cg(T_t, T_p)$ is absent of negative-weight cycle.
\end{lemma}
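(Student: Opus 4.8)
The plan is to prove both directions of the biconditional in Lemma~\ref{thm:graph} by tying together the objects already constructed in the preceding discussion: the witness trails $T_t, T_p$, the distance constraint system $\Phi(z_0,\dots,z_{n-1})$, and its constraint-graph encoding $\cg(T_t,T_p)$. The key fact supplied by the surrounding text is that $\Phi'$ (the constant-eliminated form of $\Phi$) is satisfiable iff $\cg(T_t,T_p)$ has no negative-weight cycle; so the whole argument reduces to showing that non-functionality of $\mathcal{S}$ is equivalent to the existence of a pair of trails for which $\Phi$ is satisfiable.

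For the forward direction, suppose $\mathcal{S}$ is not functional. Then, as recalled above, there is a witness input $s$ and two distinct witness outputs $t \neq p$ with $s\otimes t, s\otimes p \in \semantics{\mathcal{S}}$, realized by runs $\rho_t, \rho_p$ with corresponding $\mathcal{S}$-trails $T_t, T_p$ (these are $\mathcal{S}$-trails by Definition~\ref{def:corresponding-trail}). I would first argue that, without loss of generality, $t$ and $p$ differ only at their last position: if they first differ at some earlier position $i$, then $s[0..i]\otimes t[0..i]$ and $s[0..i]\otimes p[0..i]$ are still both in $\semantics{\mathcal{S}}$ (by the prefix-closure of runs, and hence of the transformation), so we may truncate both runs and both trails at step $i+1$. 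With that normalization, the three conditions listed in the text --- (1) every transition step of $\rho_t$ and $\rho_p$ is enabled (a condition on the input data values fully determined by $T_t, T_p$ via Lemma~\ref{thm:srt-reg} and Lemma~\ref{thm:addfree-reg}, which let every register value be written as a single $z_i$ or as $0$); (2) the register orderings encoded in the states $q^t_i, q^p_i$ are respected; and (3) $t$ and $p$ differ only at the last position --- are all expressible as a conjunction of atomic constraints of the form $z_i \sim z_j$ or $z_i \sim c$. The actual data values of $s$ witness a solution, so $\Phi$, hence $\Phi'$, is satisfiable, hence by the cited fact $\cg(T_t,T_p)$ has no negative-weight cycle.

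For the converse, suppose there exist $\mathcal{S}$-trails $T_t, T_p$ with $\cg(T_t,T_p)$ free of negative-weight cycles. Then $\Phi'$ is satisfiable, so there is an assignment of data values $z_0,\dots,z_{n-1}$ satisfying $\Phi$; take $s$ to be the input data word with these data values (and the labels read off from $T_t$, which must agree with those of $T_p$ since both runs read the same input --- I would note that the encoding of $\Phi$ implicitly forces $\Pi$-projections onto labels to coincide, or add this as an explicit conjunct). Conditions (1) and (2) guarantee that the two runs $\rho_t, \rho_p$ over $s$ are genuine runs of $\mathcal{S}$ (every step enabled, every register ordering consistent with the trail automaton states, so by Lemma~\ref{thm:trail} nothing gets stuck prematurely), and condition (3) guarantees the outputs $t = out(\rho_t)$ and $p = out(\rho_p)$ differ at the last position, so $t \neq p$. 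Hence $s\otimes t$ and $s\otimes p$ are two distinct elements of $\semantics{\mathcal{S}}$ sharing the same input, i.e., $\mathcal{S}$ is not functional.

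The main obstacle is the bookkeeping in condition (3) and in making sure the two trails are over a common input: one has to be careful that differing only at the last position is genuinely encodable as a distance constraint (the output data value at the last step is $R'[u]$ for the respective output registers, which by Lemma~\ref{thm:addfree-reg} is again some $z_i$ or $0$, so ``$\neq$'' at the last position becomes a strict inequality in one direction --- but a disjunction $<$ or $>$, which is not a single distance constraint). I expect the resolution is that one fixes the direction of the final strict inequality when choosing the pair $(T_t,T_p)$, i.e., one quantifies over ordered pairs of trails and over the two possible orientations of the last-position disparity, which only doubles the (finite) search space and does not affect the statement. The remaining directions are routine given Lemmas~\ref{thm:srt-reg}, \ref{thm:addfree-reg}, \ref{thm:trail} and the standard constraint-graph characterization of difference-constraint satisfiability.
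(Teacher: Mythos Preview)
Your proposal is correct and follows the same approach the paper takes: the paper does not give a separate proof block for this lemma but relies on the preceding paragraph (the construction of $\Phi$ from the two trails and the standard CLRS fact that a system of difference constraints is satisfiable iff its constraint graph has no negative-weight cycle). You have simply spelled out both directions of that implicit argument, including the prefix-truncation to normalize the witness so that $t$ and $p$ differ only at the last position, and you correctly flag the two bookkeeping points (agreement of input labels across the two trails, and the disjunctive nature of the final ``$\neq$'') that the paper glosses over.
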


Lemma~\ref{thm:graph} reduces the functionality check to the search of witness trails. The decidability is proven by showing that the length of the shortest witness trail is bounded. We first focus on the case of \AD, in which the data domain is dense.




\begin{theorem}
\label{thm:ad}
The functionality of \AD~ is {\sf NEXPTIME}.
\end{theorem}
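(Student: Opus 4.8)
The strategy is to establish, via Lemma~\ref{thm:graph}, a small-model property: if an \AD~$\mathcal{S}$ is non-functional, then there are witness $\mathcal{S}$-trails $T_t, T_p$ whose length is at most exponential in the size of $\mathcal{S}$, so that the search for witnesses becomes a nondeterministic procedure running in doubly-exponential time. The plan is first to fix two candidate $\mathcal{S}$-trails $T_t$ and $T_p$ of the same length $n$ that agree on the input-label component and differ only in the output at the last position (as dictated by condition (3) above), and analyze the constraint graph $\cg(T_t, T_p)$. By Lemma~\ref{thm:addfree-reg}, in an \addfree~each register value along a run is either the initial constant or a single input value $z_e$; consequently, \emph{every} edge of $\cg(T_t,T_p)$ connects either two of the $n$ input variables $z_0,\dots,z_{n-1}$, or one input variable and the special vertex $zero$, and its weight is $0$ or $-\epsilon$. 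So the graph has $n+1$ vertices, $O(n^2)$ edges, and all weights drawn from $\{0, -\epsilon\}$.

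The key combinatorial step is then: since $\domain$ is dense, a negative-weight cycle exists if and only if the graph contains a directed cycle using at least one $-\epsilon$ edge --- i.e. a cycle that is not entirely made of $0$-weight (equality) edges. Equivalently, $\cg(T_t,T_p)$ is negative-cycle-free iff, after contracting each strongly connected component of the $0$-weight subgraph to a point, the resulting graph on the SCC-quotient has no directed cycle at all; this is a purely qualitative (reachability) condition on the quotient order of the variables. The plan is to push this observation back through the trails: whether a prefix of $T_t$/$T_p$ can be consistently extended depends only on (i) the current pair of \SRT~states, (ii) the weak ordering $W$ of the registers of $\mathcal{S}_t$ and of $\mathcal{S}_p$ together (as already tracked by the trail automaton of the product construction from Theorem~\ref{thm:srt-closure}/\ref{thm:addfree-closure}), and (iii) the partial equivalence/order relation induced so far among the ``live'' input positions that some register still points to. Because \addfree~registers never accumulate, only the $k_1+k_2$ currently-referenced positions matter; all earlier input variables that no register points to can be projected away without affecting negative-cycle-freeness. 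Hence the relevant information is a finite ``summary'' ranging over $Q_t\times Q_p$, a weak order on $\le k_1+k_2$ elements, and an equivalence-plus-partial-order on $\le k_1+k_2$ referenced positions --- a set of size singly exponential in $k=\max(k_1,k_2)$ and polynomial in $|Q|$.

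With this finite summary in hand, a standard pumping argument applies: if some witness trail pair exists, then one exists in which no summary repeats along the joint run, so its length is bounded by the number of summaries, which is $2^{O(k)}\cdot |Q|^{O(1)}$, i.e. singly exponential. A nondeterministic algorithm then guesses $T_t$ and $T_p$ of length $\le 2^{O(k)}|Q|^{O(1)}$ symbol by symbol, checks at each step that both are $\mathcal{S}$-trails (using the trail automaton, which only needs the weak-order state), checks condition (3), builds $\cg(T_t,T_p)$, and verifies absence of a negative-weight cycle by Bellman--Ford in time polynomial in the (exponential) trail length --- overall nondeterministic time $2^{O(k)}|Q|^{O(1)}$, i.e. {\sf NEXPTIME}. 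If a witness is found, $\mathcal{S}$ is non-functional by Lemma~\ref{thm:graph}; otherwise it is functional.

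\textbf{Main obstacle.} The delicate point is formalizing precisely which finite summary suffices, and proving that it is an invariant under pumping: one must show that cutting out a loop between two identical summaries preserves (a) the property that $T_t, T_p$ remain valid $\mathcal{S}$-trails differing only at the final output, and, more subtly, (b) negative-cycle-freeness of the \emph{contracted} constraint graph --- the worry being exactly the non-elasticity phenomenon flagged in the discussion after the \Strand~encoding, namely that contracting the trail can re-route which input position a register refers to. The resolution is that \addfree~(unlike general \SRT) has no additive accumulation, so a register's reference position is determined by the \emph{most recent} $\newv$ update within the retained portion of the trail; as long as the cut does not remove the latest $\newv$ for any live register, the family of sets $P^{T}_{i,j}$ restricted to retained positions is preserved, and so is the induced constraint subgraph up to isomorphism. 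Making this bookkeeping rigorous --- choosing the summary to also record, for each register, enough about its current reference so that the cut can always be taken to respect it --- is where the real work lies. The density of $\domain$ is used only in the clean ``negative cycle $=$ non-trivial cycle'' characterization, which is what keeps the summary qualitative and hence finite; in the discrete case one would additionally have to track how many $-1$ steps accumulate around potential cycles, which is why that case (treated separately, giving \addfree~$\in${\sf 2NEXPTIME}) is harder.
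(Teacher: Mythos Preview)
Your plan follows the same overall route as the paper: reduce non-functionality via Lemma~\ref{thm:graph} to the existence of a pair of $\mathcal{S}$-trails whose constraint graph has no negative cycle, prove a singly-exponential small-model bound on such witnesses by a pumping argument, and then guess-and-check. The difference is in how the pumping is set up. The paper does \emph{not} introduce your richer summary (cross-run register order, live referenced positions); it pumps only on the pair of trail-automaton states $(w^t_i,w^p_i)\in (Q\times\mathcal{W}_k)^2$, giving the bound $(|Q|\cdot\mathcal{B}_k)^2$. Rather than arguing abstractly that identical summaries make contraction safe, the paper fixes a hypothetical negative cycle in the contracted graph $\cg(T'_t,T'_p)$, partitions the vertices into $A$ (positions $<m$ together with $zero$) and $B$ (the tail), observes that the $E_A\cup E_B$ part embeds into the original $\cg(T_t,T_p)$, and then shows by a short case analysis (does the cycle pass through $zero$ or not) that any $A$--$B$ crossing can be short-circuited inside $E_A$ using the register-ordering edges already imposed by the repeated state $w^t_m$/$w^p_m$. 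So where you push the bookkeeping into the summary, the paper keeps the summary minimal and does the work in the cycle analysis.

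There is one small gap in your plan that the paper's case split is precisely designed to cover. Your claim that every edge weight of $\cg(T_t,T_p)$ lies in $\{0,-\epsilon\}$ holds for \AU~but not for \AD~in general: nonzero initial register values $c$ produce edges incident to $zero$ with weights like $c-\epsilon$ or $-c$, so your ``negative cycle $=$ cycle containing a strict edge'' characterization, and with it the purely qualitative summary, fails. The paper's Case~2 is exactly where this is handled: it isolates the two $zero$-incident edges of the cycle (weights $c_1,c_2$), uses $c_1+c_2\le 0$ together with the non-positivity of the remaining segments, and derives a contradiction with the register ordering recorded in $w^t_m$. Density is used there only to ensure all non-$zero$ edges have weight $0$ or $-\epsilon$, which keeps this argument qualitative; your plan needs the analogous step. (Minor note: there is a single \SRT~here, so your $k_1,k_2$ and $Q_t,Q_p$ are all just $k$ and $Q$; the two ``copies'' are two runs of the same machine on the same input.)
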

\begin{proof}
 Let $\mathcal{S} = (Q, q_0, k, R_0, \Delta)$ be a $(\Sigma, \Gamma, \domain)$-\AD. Assume $\mathcal{S}$ is not functional and $T_t, T_p$ are the pair of shortest witness trails for the non-functionality.
 We claim that the size of $T_t$ and $T_p$ is bounded by $(|Q| \cdot \mathcal{B}_k)^2$, where $\mathcal{B}_k$ is the $k$-th ordered Bell number
  (whose growth rate is $O(2^{\textsf{poly}(k)})$). 
Then an nondeterministic algorithm can guess two trails $T_t$ and $T_p$ of length up to $\mathcal{Q}$, build $\mathcal{CG}(T_t, T_p)$ and check the absence of negative-weight cycle in polynomial time.
  
By Lemma~\ref{thm:graph}, the constraint graph $\mathcal{CG}(T_t, T_p)$ does not contain any negative-weight cycle. In addition, $T_t$ and $T_p$ are $\mathcal{S}$-trails and accepted by the trail automaton $\tilde{\mathcal{S}}$. Let $|s| =  n$ and the two runs can be represented as
\begin{align*}
\textrm{run for } T_t:~ w_0 \xrightarrow{T_t[0]} w^t_1 \xrightarrow{T_t[1]} w^t_2 \dots \xrightarrow{T_t[n-1]} w^t_n \\
\textrm{run for } T_p:~  w_0 \xrightarrow{T_p[0]} w^p_1 \xrightarrow{T_p[1]} w^p_2 \dots \xrightarrow{T_p[n-1]} w^p_n
\end{align*}
Note that $\tilde{\mathcal{S}}$ contains $|Q| \cdot \mathcal{B}_k$ states. Then if the two trails are longer than $(|Q| \cdot \mathcal{B}_k)^2$, there must be two positions $m < j$ such that both the two runs repeat their states, i.e., $w^t_m = w^t_j$ and $w^p_m = w^p_j$. Then we claim that shorter witness trails can be constructed from the following two contracted trails:
\begin{align*}
T'_t ~\defeq~ T_t[0..m-1] \cdot T_t[j..n-1] \\
T'_p ~\defeq~ T_p[0..m-1] \cdot T_p[j..n-1]
\end{align*}

By Lemma~\ref{thm:graph}, it remains to show the absence of negative-weight cycle in $\cg(T'_t, T'_p)$.
 We split the vertices of the constraint graph into two groups: $A$ for variables $z_0$ through $z_{m-1}$ as well as the special variable $zero$; $B$ for variables $z_j$ through $z_{n-1}$. There are three kinds of edges: 
 \begin{itemize}
 \item[$E_A$]: edges within $A$.
  \item[$E_B$]: edges within $B$.
   \item[$E_{AB}$]: edges between $A$ and $B$.
 \end{itemize}
 Intuitively, $E_A$ is the constraints for the first $m$ steps of transitions; $E_B$ is the constraints for the last $(n-j)$ steps of transitions in which the register values are also defined in the last $(n-j)$ steps; $E_{AB}$ is the constraints or the last $(n-j)$ steps of transitions in which the register values are defined in the first $m$ steps.
 
 Notice that the $E_A \cup E_B$ portion of $\cg(T'_t, T'_p)$ is isomorphic to the corresponding portion of $\cg(T_t, T_p)$ and does not contain any negative-weight cycle (otherwise the same cycle already exists in $\cg(T_t, T_p)$ and $T_t$ and $T_p$ cannot be witnesses); in other words, any negative weight cycle of $\cg(T'_t, T'_p)$ must involve at least two edges from $E_{AB}$, one from $A$ to $B$ and another one from $B$ to $A$. 
  Let the edge from $A$ to $B$ be $z_a \rightarrow z_b$ and the edge from $B$ to $A$ be $z_c \rightarrow z_d$, where $a < b$ and $d < c$ ($a$ and $d$ are not necessarily different; so are $b$ and $c$). 
  We next consider two cases: the cycle involves the vertex $zero$ or not.
 
 \paragraph{Case 1: the cycle does not involve zero.} Based on the construction of the constraint graph, $z_a$ and $z_d$ are equivalent to two register values that coexist after the first $m$ steps. Now we consider two cases. First, if $a = d$, then $z_b$ and $z_c$ compare to the same register's value, and the same cycle should already exist in $\mathcal{CG}(T_t, T_p)$, contradiction.
Second, if $a \neq d$, the cycle is split into two segments: 
\begin{itemize}
\item[$S_1$]: $z_a \rightarrow z_b \rightarrow \dots \rightarrow z_c \rightarrow z_d$
\item[$S_2$]: $z_d \rightarrow \dots \rightarrow z_a$
\end{itemize}
As the cycle does not involve $zero$ and all edges are of non-positive weight, both $Weight(S_1)$ and $Weight(S_2)$ are non-positive and at least one of them is negative. As $z_a$ and $z_d$ represent two coexisting register values, the repeated program state $w^t_m$ have already imposed an edge from $z_a \rightarrow z_d$ with a non-positive weight (negative or $0$, depending on the $Weight(S_2)$). The edge $z_a \rightarrow z_d$ can serve as a shortcut for $S_2$ and be combined with $S_1$ to form a negative-weight cycle within $E_A$ only, contradiction.

\paragraph{Case 2: the cycle involves zero.} As $zero$ belongs to group $A$, the cycle is split to three segments: 
\begin{itemize}
\item [$S_1$]: $z_b \rightarrow \dots \rightarrow z_c$
\item [$S_2$]: $z_c \rightarrow \dots \xrightarrow{c_1} zero$
\item [$S_3$]: $zero \xrightarrow{c_2} \dots \rightarrow z_b$
\end{itemize}
 Recall that when $\domain$ is dense, hence only the two edges connecting $zero$ have weights $c_1$ and $c_2$, the only two values that can be not $0$ or $-\epsilon$. Therefore we have $Weight(S_1) \leq 0$, $Weight(S_2) \leq c_1$, $Weight(S_3) \leq c_2$,  $c_1 + c_2 \leq 0$ and at least one of these inequalities is strict.
and assume $z_a$ is the vertex within the loop with the greatest index $a$. Then the loop consists of two segments: 
Based on the semantics of the constraint graph, $z_b$ and $z_c$ are comparing with two register values $r_b$ and $r_c$ that coexist after the first $m$ steps. From $S_2$ we have $z_c \geq r_c \geq -c_1$; from $S_3$ we have ``$c_2 \geq r_b \geq z_b$. As $c_1 + c_2 \leq 0$, the chain of inequalities guarantees $r_c > r_b$ or $r_c = r_b$, and the ordering must have been encoded to the state $w^t_m$. In both cases, the state cannot lead to the next $(n-j)$ steps of transitions and form the chain $S_1$, which requires $r_b \geq r_c$ or $r_b > r_c$.
\end{proof}

\begin{theorem}
\label{thm:au}
The functionality of \AU~ is {\sf NEXPTIME}.
\end{theorem}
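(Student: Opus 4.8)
The plan is to run the argument of Theorem~\ref{thm:ad} almost verbatim, the only genuinely new wrinkle being that an \AU~ may live over a \emph{discrete} data domain, so the constraint graph assigns weight $-1$ (rather than the infinitesimal $-\epsilon$) to each strict inequality. Here the \emph{uninitialized} hypothesis is what saves us: combining Lemma~\ref{thm:addfree-reg} with $R_0 = (0,\dots,0)$, every register value along any run is either $0$ or one of the input values $z_0,\dots,z_{n-1}$, so every conjunct of $\Phi$ has the form $z_i \sim z_j$ or $z_i \sim 0$. Consequently every edge of $\cg(T_t,T_p)$ carries weight $0$ or $-1$ (resp.\ $-\epsilon$ in the dense subcase) — in particular every edge is non-positive — and a negative-weight cycle is exactly a directed cycle containing at least one strict ($<$) edge, which is precisely the characterization tacitly used in the dense case of Theorem~\ref{thm:ad}. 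Unlike general \addfree, no edge can carry a large positive constant weight, and this is what keeps the complexity at {\sf NEXPTIME} rather than {\sf 2NEXPTIME}.

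First I would invoke Lemma~\ref{thm:graph} to reduce non-functionality of $\mathcal{S}$ to the existence of a pair of witness $\mathcal{S}$-trails $T_t, T_p$ (differing only in the last output) whose constraint graph $\cg(T_t,T_p)$ has no negative-weight cycle, and then claim that the shortest such witnesses have length at most $(|Q|\cdot\mathcal{B}_k)^2$, where $\mathcal{B}_k$ is the $k$-th ordered Bell number and $|Q|\cdot\mathcal{B}_k$ is the number of states of the trail automaton $\tilde{\mathcal{S}}$. If the witnesses were longer, running $\tilde{\mathcal{S}}$ on $T_t$ and on $T_p$ in parallel forces a repeated pair of states at positions $m<j$; I then contract both trails by deleting the segment $[m..j-1]$, obtaining $T'_t = T_t[0..m-1]\cdot T_t[j..n-1]$ and $T'_p$ similarly, which by Lemma~\ref{thm:trail} remain $\mathcal{S}$-trails still differing only at the last output, and it remains to verify that $\cg(T'_t,T'_p)$ has no negative-weight cycle.

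For the contraction step I would reuse the $A$/$B$ vertex partition from Theorem~\ref{thm:ad}: $A$ is the set of variables $z_0,\dots,z_{m-1}$ together with the constant vertex $zero$, and $B$ is $z_j,\dots,z_{n-1}$. The edges internal to $A$ and internal to $B$ form a subgraph isomorphic to a subgraph of $\cg(T_t,T_p)$, hence contain no negative cycle; so any negative cycle of $\cg(T'_t,T'_p)$ must use at least one $A\to B$ boundary edge $z_a\to z_b$ and one $B\to A$ boundary edge $z_c\to z_d$ (with $a<b$, $d<c$). The key point is that $z_a$ and $z_d$ correspond to register contents (or the constant $0$) that coexist right after step $m$, and the repeated weak-ordering component of the state $w^t_m = w^t_j$ records their relative order; this yields a non-positive shortcut edge $z_a\to z_d$ inside $A$ which, splicing together the $A$-internal and $B$-internal remainders of the cycle (the latter absorbed symmetrically via $w^t_j$), reconstructs a negative cycle already present in $\cg(T_t,T_p)$, contradicting the witness property. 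The sub-case where the cycle passes through $zero$ is handled the same way and is in fact easier than in Theorem~\ref{thm:ad}, because the two $zero$-edges now carry weight $0$ or $-1$ instead of arbitrary constants $c_1,c_2$. Finally the algorithm is the obvious one: a nondeterministic procedure guesses $T_t,T_p$ of length at most $(|Q|\cdot\mathcal{B}_k)^2$ — an object of size $2^{\mathsf{poly}(k)}\cdot\mathsf{poly}(|\mathcal{S}|)$ since $\mathcal{B}_k = 2^{O(k\log k)}$ — checks in polynomial time that both are accepted by $\tilde{\mathcal{S}}$ and agree on all outputs but the last, builds $\cg(T_t,T_p)$ (polynomially many vertices and edges in the trail length and $k$), and runs Bellman--Ford to certify the absence of a negative-weight cycle; every step after the guess is polynomial, so the test lies in {\sf NEXPTIME}. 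I expect the main obstacle to be the bookkeeping in the shortcut argument when $z_a$ or $z_d$ is the constant $0$ rather than one of the $k$ tracked registers: one must argue that a constraint against $0$ can only arise when some register actually held $0$ at that moment, so that $zero$'s position in the weak order is still pinned down by $\mathcal{W}_k$ and this pinning is consistent across $w^t_m$ and $w^t_j$ — a routine but fiddly point rather than a new idea.
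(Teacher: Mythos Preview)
Your proposal is correct and follows essentially the same route as the paper: rerun the contraction argument of Theorem~\ref{thm:ad} verbatim for Case~1, and observe that Case~2 (the cycle through $zero$) simplifies because uninitialization forces $c_1 = c_2 = 0$, so all edge weights---including those incident to $zero$---are non-positive and the shortcut argument goes through unchanged. Your discussion of the discrete subcase and the bookkeeping around $zero$ is more explicit than the paper's, but the underlying idea is identical.
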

\begin{proof}
The proof is similar to Theorem~\ref{thm:ad}. The same arguments can be made for Case 1 of the proof. For Case 2, we can still split the cycle into three segments $S_1$, $S_2$ and $S_3$. As the initial register values are all $0$ in \AU, i.e., $c_1 = c_2 = 0$, the inequalities we have can be simplified to $Weight(S_1) \leq 0$, $Weight(S_2) \leq 0$, $Weight(S_3) \leq 0$, and at least one of these inequalities is strict. Then similarly we can argue the existence of two register values $r_b$ and $r_c$ that coexist after the first $m$ steps of transitions such that $r_b > r_c$ or $r_b = r_c$. In both cases, a contradiction can be found or a negative weight cycle already exists within $E_A$.
\end{proof}

\begin{theorem}
\label{thm:addfree}
The functionality of \addfree~ is {\sf 2NEXPTIME}.
\end{theorem}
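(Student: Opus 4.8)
The plan is to follow the same template used for Theorems~\ref{thm:ad} and~\ref{thm:au}. By Lemma~\ref{thm:graph}, an \addfree~$\mathcal{S}$ fails to be functional precisely when there is a pair of $\mathcal{S}$-trails $T_t, T_p$ whose constraint graph $\cg(T_t, T_p)$ contains no negative-weight cycle, so it suffices to bound the length of the shortest such witness pair and then let a nondeterministic procedure guess it, reconstruct $\cg(T_t,T_p)$, and run a negative-cycle test. By Proposition~\ref{thm:dense-discrete} the only situation not already subsumed by Theorem~\ref{thm:ad} is a discrete domain, so I assume $\domain$ has a least positive element $1$. Two features distinguish this case from the dense one: every strict-inequality edge of $\cg$ now has weight $-1$ rather than an infinitesimal $-\epsilon$, and the edges incident to the $zero$ vertex carry genuine integer weights derived from the initial register constants in $R_0$, which --- being part of the input and possibly written in binary --- may be exponentially large. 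Lemma~\ref{thm:addfree-reg} still applies, so each register holds either a single past input value or its initial constant, and the overall shape of $\cg$ is unchanged.

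Next I would run the pumping argument. As in Theorem~\ref{thm:ad} the relevant abstract state of a run records the control state together with the weak ordering $W$ of the $k$ registers; to cope with the new integer weights near $zero$ I would enrich it further with, for each register, its position relative to the family of ``shifted constants'' reachable from $R_0$ by additive offsets up to the largest magnitude occurring in $\cg$. Running the product of the two enriched trail automata, a state repeats simultaneously in both runs after a number of steps that is doubly exponential in the size of $\mathcal{S}$; contracting $T_t, T_p$ between the two occurrences $m < j$ yields $T'_t = T_t[0..m-1]\cdot T_t[j..n-1]$ and $T'_p = T_p[0..m-1]\cdot T_p[j..n-1]$, and the remaining work is to show $\cg(T'_t,T'_p)$ has no negative-weight cycle. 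Splitting vertices into the old block $A$ (positions below $m$, plus $zero$) and the new block $B$ (positions from $j$ on), any newly created negative cycle must cross between $A$ and $B$ at least twice. The cycle-avoids-$zero$ case (Case~1 of Theorem~\ref{thm:ad}) carries over essentially verbatim: the two coexisting register values underlying the crossing edges already impose, at the repeated enriched state, a shortcut edge of weight $0$ or $-1$ within $A$, which recombines the cycle into one living in $E_A$, contradicting that $\cg(T_t,T_p)$ had no negative cycle.

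The hard part will be the case where the new cycle runs through $zero$. In the dense proof the slack $c_1+c_2 \le 0$ of the two $zero$-incident edges dominated the whole cycle weight, so a one-line chain of inequalities forced a register ordering incompatible with the pumped continuation. When the domain is discrete this no longer holds: an arbitrary --- up to exponential --- number of weight-$-1$ edges can accumulate along the three segments $S_1, S_2, S_3$, so to recover the contradiction one must know the \emph{exact} offsets of the relevant register values against the shifted constants, not merely their order. That is exactly why the enriched state must track these offsets up to the constant magnitudes, and why --- since those magnitudes can be exponential in the input while there are up to $k$ registers --- the enriched state space, hence the bound on the shortest witness trails, becomes doubly exponential rather than the singly exponential $|Q|\cdot\mathcal{B}_k$ of the \AD~ and \AU~ cases. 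I expect this quantitative bookkeeping --- pinning down precisely which offsets matter and proving the contracted constraint graph stays cycle-free at that granularity --- to be the main obstacle. Once it is in place, a nondeterministic algorithm guesses $T_t, T_p$ of doubly-exponential length, checks they are $\mathcal{S}$-trails with $\tilde{\mathcal{S}}$, builds $\cg(T_t, T_p)$ and tests for a negative-weight cycle, all in time polynomial in the (doubly exponential) trail length; by Lemma~\ref{thm:graph} this decides non-functionality of \addfree~ in {\sf 2NEXPTIME}.
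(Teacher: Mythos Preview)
Your overall template matches the paper's: reduce via Lemma~\ref{thm:graph} to the existence of a witness pair of $\mathcal{S}$-trails, bound the shortest such pair by a pumping argument on the product of two (enriched) trail automata, and observe that Case~1 of the dense proof carries over verbatim to the discrete, initialized setting. That much is fine.

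Where you diverge from the paper is in the source of the double exponential, and here there is a real gap. You attribute the blowup to the binary encoding of the constants in $R_0$: the edges incident to $zero$ carry weights of magnitude up to $M \approx \max_i |R_0[i]|$, which may be exponential in the input, and you propose to enrich each register's abstract state with its offset against these constants up to magnitude $M$. But that enrichment yields at most $|Q|\cdot\mathcal{B}_k\cdot O(M)^k$ states, and since $M$ is singly exponential and $k$ is part of the input, the product is still only singly exponential in the size of $\mathcal{S}$. So your construction, as stated, would give an {\sf NEXPTIME} bound, not {\sf 2NEXPTIME}, and you have not explained why a singly-exponential witness bound would be wrong.

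The paper obtains the double exponential by a completely different mechanism that does not depend on the encoding of $R_0$ at all. Its key observation is that in any minimal negative cycle through $zero$, the two $zero$-incident weights $c_1,c_2$ (which are initial register values) satisfy $c_1+c_2 < L-2$, where $L$ is the cycle length, because all other edges contribute $0$ or $-1$; hence the interval $[-c_2,c_1]$ has size bounded by the current state count $\mathcal{Q}$, not by the raw magnitudes in $R_0$. One can then enrich the automaton with roughly $\mathcal{Q}$ extra states per register to track values inside this one interval, apply the pumping argument again at the new, exponentially larger state count, and repeat. Each iteration eliminates one pair of initial constants and raises the state count to a power roughly $k+1$; after $k-1$ iterations the bound is $\mathcal{Q}^{(k+1)^{k-1}}$, which is genuinely doubly exponential. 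This iterated-refinement idea is what your single-shot enrichment is missing, and without it Case~2 does not go through: tracking offsets only up to the input constants' magnitude neither bounds the witness length at the right rate nor gives you the ordering information needed to derive a contradiction when a long chain of weight-$-1$ edges sits between the two $zero$-incident edges.
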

\begin{proof}
Given Theorems~\ref{thm:ad} and~\ref{thm:au} proved, what remains is to give an algorithm to check the functionality of \SRT whose data domain $\domain$ is discrete and whose registers are initialized. Again, the previous arguments for Case 1 still works and let us focus on Case 2.

We still split the cycle into $S_1$, $S_2$ and $S_3$. The incoming and outgoing edges for $zero$ are still of weight $c_1$ and $c_2$, respectively. Now as $\domain$ is discrete, all other edges are of weight $0$ or $-1$. Hence $c_1 + c_2 < L - 2 < \mathcal{Q} - 1$ where $L$ is the length of the cycle and $\mathcal{Q}$ is the number of states of the automaton we built in the proof of Theorem~\ref{thm:ad}. Notice that the two edges connecting $zero$ mean $c_1$ and $-c_2$ are two initial register values. In other words, the size of the interval $[-c_2, c_1]$ is bounded by $\mathcal{Q}-1$. Then for each register, we can introduce $\mathcal{Q}$ states to keep track of the value of the register, one state for each value in $[-c_2, c_1]$ and one extra state for values not in the range. Extending the original automaton with these states results in an automaton with $\mathcal{Q}^{k+1}$ states. We can repeat all previous arguments and claim that the shortest cycle involving initial values $-c_2$ and $c_1$ is bounded by length $\mathcal{Q}^{k+1}$.

This procedure can be continued for $k-1$ times, each time removing an interval between two initial register values at the cost of exponential state blowup. The final bound we get $\mathcal{Q}^{{k+1}^{k-1}}$, which is double exponential to the size of the original transducer $\mathcal{S}$.
\end{proof}

\begin{corollary}
For a fixed $k$, the functionality of \addfree$(k)$ is {\sf NP}.
\end{corollary}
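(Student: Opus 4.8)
The plan is to obtain the corollary directly from \thmref{ad}, \thmref{au}, and \thmref{addfree} by fixing $k$ to a constant and tracking which blowups in those proofs depend on $k$ and which depend on $|\mathcal{S}|$. Recall that all three decision procedures share the same skeleton, justified by Lemma~\ref{thm:graph}: to certify non-functionality of an \addfree~$\mathcal{S}$ it suffices to exhibit a pair of $\mathcal{S}$-trails $T_t, T_p$ whose induced transformation instances agree on the output everywhere but the last position (this normal form is without loss of generality) and whose constraint graph $\cg(T_t, T_p)$ has no negative-weight cycle. A nondeterministic algorithm therefore guesses $T_t$ and $T_p$ up to the shortest-witness length bound, checks that both are accepted by the trail automaton $\tilde{\mathcal{S}}$, builds $\cg(T_t, T_p)$, and runs Bellman--Ford to verify the absence of a negative-weight cycle; everything but the guess is polynomial in the sizes of the guessed trails and of $\mathcal{S}$.

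The first step is to check that, for fixed $k$, every witness-length bound becomes polynomial in $|\mathcal{S}|$. In the dense case (\thmref{ad}) and the uninitialized case (\thmref{au}) the bound is $(|Q| \cdot \mathcal{B}_k)^2$; since $\mathcal{B}_k$ depends only on $k$, this is $O(|\mathcal{S}|^2)$. In the general discrete, initialized case (\thmref{addfree}) the bound is $\mathcal{Q}^{(k+1)^{k-1}}$, where $\mathcal{Q}$ is polynomial in $|\mathcal{S}|$ for fixed $k$ and the exponent $(k+1)^{k-1}$ is a constant, so the bound is again a polynomial in $|\mathcal{S}|$ whose degree depends only on $k$. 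It is worth re-reading the interval-removal construction of \thmref{addfree} with this in mind: it is iterated $k-1$ times --- a constant number of rounds --- and each round raises the exponent by a $k$-dependent constant factor, so the resulting state count is a fixed-degree polynomial, not a genuine tower. Similarly, $\tilde{\mathcal{S}}$ has $|Q| \cdot \mathcal{B}_k = O(|\mathcal{S}|)$ states, and, because add-free updates are limited to \oldv~and \newv, its transition relation (compatibility of a weak ordering with a guard, and the effect of the updates on that ordering) is checkable in polynomial time, so simulating $\tilde{\mathcal{S}}$ along a guessed trail is polynomial.

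With the bounds in hand, the second step assembles the NP procedure: guess $T_t$ and $T_p$ of the (now polynomial) bounded length, ensuring their induced outputs coincide at every position but the last and differ there --- the difference being either in the output label (a structural property of the two trails) or, if the labels coincide, only in the data value, in which case also guess the orientation of the resulting disequality, which by Lemma~\ref{thm:addfree-reg} (each register and output value is $0$ or a single input value $z_e$) is a genuine distance constraint realizable by one edge; then verify acceptance by $\tilde{\mathcal{S}}$, build $\cg(T_t, T_p)$ with that extra edge, and accept iff it has no negative-weight cycle. All steps run in polynomial time, so the guess-and-check algorithm runs in NP, which gives the corollary.

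The one place I expect to need care --- the main obstacle --- is the accounting in the first step: one must confirm that \emph{every} source of exponential growth in \thmref{addfree} is a function of $k$ alone (the ordered Bell number $\mathcal{B}_k$, the number $k-1$ of interval-removal rounds, and the per-round exponent growth), so that for fixed $k$ the composite $\mathcal{Q}^{(k+1)^{k-1}}$ genuinely collapses to a polynomial in $|\mathcal{S}|$ rather than merely to something subexponential. Once that is verified, membership in NP is routine.
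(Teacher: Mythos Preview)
Your proposal is correct and follows essentially the same approach as the paper: the paper's proof is the single sentence ``The exponential blowup in previous three theorems is for $k$, the number of registers,'' and your write-up is a careful expansion of exactly that observation, tracking each $k$-dependent quantity ($\mathcal{B}_k$, the exponent $(k+1)^{k-1}$, the number of interval-removal rounds) and confirming that the remaining guess-and-check procedure is polynomial in $|\mathcal{S}|$. Your added care about the last-position disequality and the Bellman--Ford verification is sound detail that the paper simply omits.
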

\begin{proof}
The exponential blowup in previous three theorems is for $k$, the number of registers.
\end{proof}

\subsection{Reactivity and Inclusion}

A desirable property of \SRT~ is the reactivity, which intuitively means the transducer never stuck, i.e., it can take arbitrary infinite stream of input and generate infinite output stream.

\begin{definition}[Reactivity]
A $(\Sigma, \Gamma, D)$-streaming transformation $\mathcal{T}$ is \emph{reactive} if for any input data word $s \in (\Sigma \times D)^*$, there exists an output $d \in (\Gamma \times D)^*$ such that $s \otimes t \in \mathcal{T}$. We call an \SRT~$\mathcal{S}$ \emph{reactive} if $\semantics{\mathcal{S}}$ is reactive.
\end{definition}

We also consider the inclusion problem, i.e., if an \SRT's transformations are always transformations of another \SRT:
\begin{definition}[Inclusion]
Given two $(\Sigma, \Gamma, D)$-\SRT~$\mathcal{S}$ and $\mathcal{S'}$, we say $\mathcal{S}$ is included in $\mathcal{S'}$ if $\semantics{\mathcal{S}} \subseteq \semantics{\mathcal{S'}}$.
\end{definition}

When \SRT~ is deterministic, the reactivity problem can be reduced to the inclusion problem.
\begin{lemma}
\label{thm:reactivity-inclusion}
The reactivity problem of \DSRT~(and also \daddfree) reduces to the inclusion problem for the corresponding inclusion problem.
\end{lemma}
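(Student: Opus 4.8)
The plan is to observe that reactivity of a deterministic transducer is exactly the statement that it is "included" in a canonical always-reactive transducer that accepts every input and produces every possible output. First I would construct, for a given \DSRT~$\mathcal{S} = (Q, q_0, k, R_0, \Delta)$ over signature $(\Sigma, \Gamma, \domain)$, the trivial \SRT~$\mathcal{S}_\top$ with a single state, no registers (or registers that are never constrained), and a transition for every $(\sigma, l, m, u, \gamma)$ combination, so that $\semantics{\mathcal{S}_\top}$ is the full set of all $(\Sigma, \Gamma, \domain)$-transformation instances together with all their prefixes. Then I claim $\mathcal{S}$ is reactive if and only if $\semantics{\mathcal{S}_\top} \subseteq \semantics{\mathcal{S}}$ — wait, that is the wrong direction; the correct formulation is the one needing a little care, which is the crux of the reduction.

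The key steps, in order: (1) Spell out what reactivity means operationally for a \DSRT. Since $\mathcal{S}$ is deterministic, for each finite input $s$ there is at most one run, hence at most one output; reactivity says this run always exists and can always be extended by one more letter regardless of the next input data value. By Lemma~\ref{thm:trail}, a run of $\mathcal{S}$ fails to extend exactly when its corresponding trail is maximal, i.e., the trail automaton $\tilde{\mathcal{S}}$ reaches a state from which some input label has no enabled successor. (2) Reformulate this as an inclusion. The idea is: $\mathcal{S}$ is reactive iff for every finite input data word $s$ and every one-letter extension, the (unique) run of $\mathcal{S}$ on $s$ can be prolonged — equivalently, iff the set of input data words on which $\mathcal{S}$ has a defined run is \emph{all} of $(\Sigma \times D)^*$. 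Introduce an auxiliary \DSRT~(or \daddfree) $\mathcal{S}_{\mathrm{dom}}$ obtained from $\mathcal{S}$ by forgetting the outputs: it has the same states, registers, initial configuration, and guards/updates, but a fixed dummy output label on every transition; its transformation's set of input words is precisely the "domain" of $\mathcal{S}$. Then build the always-reactive reference transducer $\mathcal{U}$ that accepts every input (over the same dummy output alphabet). The reduction output is the inclusion instance $\semantics{\mathcal{U}} \subseteq \semantics{\mathcal{S}_{\mathrm{dom}}}$: this holds iff $\mathcal{S}_{\mathrm{dom}}$ accepts every input, iff $\mathcal{S}$ is reactive. (3) Verify that all constructed transducers are themselves \DSRT~(respectively \daddfree~when $\mathcal{S}$ is), since $\mathcal{U}$ and $\mathcal{S}_{\mathrm{dom}}$ introduce no nondeterminism and no additive updates beyond those already in $\mathcal{S}$; this is what makes the reduction land inside the deterministic (add-free) inclusion problem rather than the general one. (4) Note the reduction is polynomial-time, since each constructed transducer has size linear in $|\mathcal{S}|$.

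The main obstacle I anticipate is pinning down the precise inclusion instance so that the equivalence is exactly "iff" rather than merely one direction — in particular handling the prefix-closure stipulation baked into Definition~\ref{def:transformation}, and making sure that "input accepted by $\mathcal{S}_{\mathrm{dom}}$" really coincides with "run of $\mathcal{S}$ extendable", which relies on Lemma~\ref{thm:trail} and the determinism of $\mathcal{S}$ (so that "the run exists" is unambiguous). A secondary subtlety is that the output alphabet and data value of the dummy output must be chosen so that $\mathcal{U}$ genuinely produces \emph{every} instance over that restricted alphabet, so that $\semantics{\mathcal{U}}$ is the full prefix-closed set and the inclusion degenerates to a pure language-of-inputs containment. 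Once these are set up correctly, the rest is routine bookkeeping.
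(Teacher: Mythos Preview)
Your approach is correct in spirit but takes a different route from the paper. The paper does not introduce an auxiliary universal transducer $\mathcal{U}$ or a domain-abstracted $\mathcal{S}_{\mathrm{dom}}$. Instead it \emph{completes} $\mathcal{S}$ directly: it adds a fresh output label $\gamma_\bot$ and a sink state $q_\bot$, routes every previously missing guard to $q_\bot$, and from $q_\bot$ loops forever emitting $(\gamma_\bot,0)$. Call the result $\mathcal{S}'$. Then trivially $\semantics{\mathcal{S}} \subseteq \semantics{\mathcal{S}'}$, and because $\mathcal{S}$ is deterministic, $\mathcal{S}'$ ever emits $\gamma_\bot$ precisely on inputs where $\mathcal{S}$ gets stuck; hence $\semantics{\mathcal{S}'} \subseteq \semantics{\mathcal{S}}$ iff $\mathcal{S}$ is reactive. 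The inclusion instance is thus $\mathcal{S}' \subseteq \mathcal{S}$, with both transducers over the enlarged output alphabet.

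The main advantage of the paper's construction is that it sidesteps exactly the ``secondary subtlety'' you flagged: since $\mathcal{S}'$ literally replays $\mathcal{S}$'s transitions (same registers, same output index) until it falls into the sink, the output data values automatically match on the common part, and the mismatch is confined to the fresh label $\gamma_\bot$ which $\mathcal{S}$ can never produce. Your construction, by contrast, must force both $\mathcal{U}$ and $\mathcal{S}_{\mathrm{dom}}$ to emit an identical constant data value at every step; in the \SRT\ model the output data value always comes from a register, so you would need to add a dedicated constant-$0$ register to $\mathcal{S}_{\mathrm{dom}}$ and triplicate each transition across the three guard outcomes for that register (and give $\mathcal{U}$ at least one register for the same reason, since $k=0$ leaves no valid output index). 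This is all doable and keeps determinism and add-freeness, so your reduction goes through, but it is bulkier than the paper's single-sink completion.
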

\begin{proof}
Given an \SRT~$\mathcal{S}$, we can extend $\mathcal{S}$ to a new \SRT~$\mathcal{S'}$, which has one more special output label $\gamma_{\bot}$, one more special state $q_{\bot}$. $\mathcal{S}$ also adds two kinds of extra transitions: 1) for all guard conditions that do not have any available transitions in $\mathcal{S}$, add a transitions that switches to $q_{\bot}$; 2) from $q_{\bot}$ there is only a loop transition that emits $(\gamma_{\bot}, 0)$ forever. In summary, $\mathcal{S'}$ mimics $\mathcal{S}$ on all inputs as long as there is an output. Whenever $\mathcal{S}$ is stuck, $\mathcal{S'}$ continues and consistently emits the dumb output $(\gamma_{\bot}, 0)$. Obviously, $\mathcal{S}$ is included in $\mathcal{S'}$. Moreover, when $\mathcal{S}$ is deterministic, $\mathcal{S'}$ emits the dumb output only if the input is not accepted by $\mathcal{S}$. In other words, $\mathcal{S'}$ is also included in $\mathcal{S}$ if and only if $\mathcal{S}$ is reactive.
\end{proof}

Therefore we discuss the two problems together for \DSRT~ and \daddfree.


\begin{theorem}
The reactivity problem of \DSRT~ is undecidable.
\end{theorem}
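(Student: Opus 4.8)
The plan is to reduce from the halting problem of 2-counter machines~\cite{Minsky1967}, reusing much of the simulation idea already developed in the proof of Theorem~\ref{thm:srt-functionality}, but adapting it so that the reduction produces a \emph{deterministic} \SRT~ and so that halting corresponds to \emph{non-reactivity} rather than non-functionality. Concretely, given a 2-counter machine $\mathcal{M}$, I would build a \DSRT~$\mathcal{S}$ over the integer data domain that mimics $\mathcal{M}$ step-by-step: the finite control of $\mathcal{M}$ becomes the state set, two mutable registers hold the counter values, and three constant registers hold $0$, $1$, $-1$. Each increment expects an input letter with data value $1$ (checked against the constant register) and adds it to the appropriate register; each decrement expects $-1$; a zero-test expects the next data value to equal both the relevant counter register and the $0$-register; and the machine outputs a fixed dummy letter at every step. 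Crucially, unlike the construction in Theorem~\ref{thm:srt-functionality}, when the simulation reaches $\mathcal{M}$'s halting state the \DSRT~ has \emph{no outgoing transition at all}, so the run is stuck.

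The key observation is then: $\mathcal{S}$ is reactive if and only if $\mathcal{M}$ does \emph{not} halt. For one direction, if $\mathcal{M}$ does not halt, then on any input prefix either the prefix is a faithful encoding of a (necessarily infinite) run of $\mathcal{M}$, in which case $\mathcal{S}$ always has a next transition available; or the prefix deviates from the faithful encoding at some position. To handle deviating inputs I would add, for every state and every guard outcome that is \emph{not} consumed by a simulation transition, a transition to a fresh ``garbage'' state $q_{\mathit err}$ that carries a self-loop emitting a dummy output $(\gamma,0)$ on every guard outcome --- so once the input stops looking like a valid encoding, $\mathcal{S}$ still reacts forever. Thus the \emph{only} way for $\mathcal{S}$ to get stuck is to have read a finite faithful encoding of a run of $\mathcal{M}$ that arrives at the halting state; this is possible exactly when $\mathcal{M}$ halts. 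Conversely, if $\mathcal{M}$ halts after $N$ steps, then the length-$N$ input word encoding that halting computation is a finite input for which no output extension exists, witnessing non-reactivity.

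I would need to be a little careful that the construction is genuinely deterministic in the sense of Definition~\ref{def:dsrt}: for each state and input label, the guard vectors labeling outgoing transitions must be pairwise distinct and, together with the garbage transitions, must cover all of $\{>,=,<\}^k$, so that every prefix has a canonical successor. Since a 2-counter machine instruction inspects at most one counter (or compares two counters), only a bounded fixed number of guard patterns matter for the ``real'' transitions, and the garbage transitions fill in the rest; determinism then holds by construction. The main obstacle --- and the place I would spend the most care in the write-up --- is exactly this bookkeeping: ensuring the simulation is faithful (a prefix leads to a non-stuck, non-garbage configuration if and only if it encodes a legitimate partial run of $\mathcal{M}$), and that deviations are always caught and redirected to $q_{\mathit err}$ rather than causing a spurious stuck configuration. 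Everything else is routine: undecidability of the halting problem for 2-counter machines is classical, and the reduction is clearly computable.
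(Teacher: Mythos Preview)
Your proposal is correct and follows essentially the same approach as the paper: reduce from the halting problem for 2-counter machines, reuse the simulation from Theorem~\ref{thm:srt-functionality}, route any ``unexpected'' input to an absorbing error state that emits a dummy output forever, and make the halting state a dead end so that $\mathcal{S}$ is reactive iff $\mathcal{M}$ does not halt. Your write-up is in fact more careful than the paper's about the determinism bookkeeping (covering all guard vectors in $\{>,=,<\}^k$ per state), which is exactly the right place to spend the extra detail.
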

\begin{proof}
The proof is similar to that of Theorem~\ref{thm:srt-functionality}. A similar \SRT~$\mathcal{S}$ can be constructed from a given 2-counter machine $\mathcal{M}$. The difference is that the \SRT~ here does not emit arbitrary output after the simulation. In stead, when the input is not as expected in the simulation, i.e., no transition is available, $\mathcal{S}$ jumps to a special state, which repeatedly emits a dumb output such as $(\gamma, 0)$, regardless of the input. Therefore, $\mathcal{S}$ terminates if and only if the simulated execution of $\mathcal{M}$ terminates. Notice that the constructed $\mathcal{S}$ here is deterministic, hence the halting problem of 2-counter machines is reduced to the reactivity of \DSRT.
\end{proof}

\begin{corollary}
The reactivity problem and inclusion problem of \SRT~ are both undecidable.
\end{corollary}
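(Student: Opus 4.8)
The plan is to derive both undecidability results directly, as immediate consequences of the preceding theorem (undecidability of reactivity for \DSRT) and of Lemma~\ref{thm:reactivity-inclusion}, without introducing any new construction. First I would record the trivial but essential observation that \DSRT~ is, by Definition~\ref{def:dsrt}, a syntactic subclass of \SRT: every \DSRT~ is in particular an \SRT. Consequently, any algorithm deciding the reactivity problem for arbitrary \SRT~ would also decide it for the \DSRT~ given as input. Since the preceding theorem shows the reactivity problem for \DSRT~ is undecidable, the reactivity problem for \SRT~ is undecidable as well.

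For the inclusion problem I would invoke Lemma~\ref{thm:reactivity-inclusion}, which reduces the reactivity problem of \DSRT~ to an inclusion query $\semantics{\mathcal{S'}} \subseteq \semantics{\mathcal{S}}$. The one point that needs to be checked is that the transducer $\mathcal{S'}$ built in that lemma from a deterministic $\mathcal{S}$ is itself a legitimate \SRT~ (indeed a \DSRT): the transitions it adds are only (i) one transition to the fresh sink state $q_{\bot}$ for each guard condition under which $\mathcal{S}$ had no enabled transition — this introduces no nondeterminism, since it fires precisely where $\mathcal{S}$ was silent — and (ii) the single self-loop on $q_{\bot}$ emitting $(\gamma_{\bot},0)$. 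Hence the reduction's target is an honest instance of the \SRT~ inclusion problem, and undecidability of \DSRT~ reactivity transfers to undecidability of \SRT~ inclusion.

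The main (and essentially only) obstacle is rhetorical rather than technical: all the real work — the $2$-counter-machine encoding of Theorem~\ref{thm:srt-functionality} and its adaptation to reactivity, together with the reduction of Lemma~\ref{thm:reactivity-inclusion} — has already been done. I would therefore keep the proof to a few sentences, citing the preceding reactivity theorem for the first claim and Lemma~\ref{thm:reactivity-inclusion} (plus the remark that $\mathcal{S'}$ remains within \SRT) for the second.
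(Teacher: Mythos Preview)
Your proposal is correct and follows exactly the paper's approach: the paper's own proof is the single line ``By Lemma~\ref{thm:reactivity-inclusion},'' and you have simply unpacked that citation, adding the (trivial but helpful) observation that \DSRT~$\subseteq$~\SRT~ handles the reactivity half while the lemma's reduction handles the inclusion half. If anything, your version is more complete than the paper's terse one-liner.
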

\begin{proof}
By Lemma~\ref{thm:reactivity-inclusion}.
\end{proof}

\begin{theorem}
\label{thm:daddfree}
The inclusion problem (and also the reactivity problem) of \daddfree~ is {\sf 2NEXPTIME}.
\end{theorem}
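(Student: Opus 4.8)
The plan is to follow the same overall strategy used for the functionality results (Theorems~\ref{thm:ad}--\ref{thm:addfree}): reduce the decision problem to a search for a bounded-size witness, where the witness is a pair of trails (now drawn from the two transducers $\mathcal{S}$ and $\mathcal{S}'$) together with a constraint graph whose satisfiability is checked by absence of a negative-weight cycle. Concretely, $\semantics{\mathcal{S}} \not\subseteq \semantics{\mathcal{S}'}$ means there is a finite input $s$ and output $t$ with $s \otimes t \in \semantics{\mathcal{S}}$ but $s \otimes t \notin \semantics{\mathcal{S}'}$. The witnessing run in $\mathcal{S}$ has a corresponding $\mathcal{S}$-trail $T$; for $\mathcal{S}'$, since $\mathcal{S}'$ is deterministic, \emph{failure} to produce $s\otimes t$ means the unique run of $\mathcal{S}'$ on $s$ either gets stuck at some prefix or produces an output differing from $t$ at some position. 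First I would set up a product construction over the trail automata $\tilde{\mathcal{S}}$ and $\tilde{\mathcal{S}'}$ (cf.\ Definition~\ref{def:trail-automaton}), tracking the state of $\tilde{\mathcal{S}}$, the state of $\tilde{\mathcal{S}'}$, and the pairwise ordering between all $k+k'$ registers, so that the witnessing scenario becomes: a run of this product automaton that ends either in a ``$\mathcal{S}'$ stuck'' configuration, or at a step where the two output registers are forced to be unequal. The key point, exactly as in the functionality proofs, is that by Lemma~\ref{thm:addfree-reg} every register value along either trail is either an initial value or a single input value $z_e$, so the enabledness/ordering constraints collapse to distance constraints $z_i \sim z_j$ or $z_i \sim c$; hence all semantic feasibility reduces to a constraint graph $\cg(T, T')$ with no negative-weight cycle (Lemma~\ref{thm:graph}'s argument, adapted from one transducer to the product of two).

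Next I would establish the bounded-witness lemma: if such a witness exists, then one of size at most double-exponential in $|\mathcal{S}|+|\mathcal{S}'|$ exists. The pumping argument is the one from Theorem~\ref{thm:ad}: if the witness trail is longer than (number of product-automaton states)$^2$, two positions $m < j$ repeat \emph{both} components' states, and the contracted trails $T[0..m{-}1]\cdot T[j..n{-}1]$ (and likewise for the $\mathcal{S}'$ side) still constitute a witness provided $\cg$ of the contraction has no negative cycle. Proving that is the content of Cases~1 and~2 from Theorem~\ref{thm:ad}: any new negative cycle must use two cross edges ($E_{AB}$), and the repeated-state shortcut edges $z_a \to z_d$ imposed by $w_m$ let one fold such a cycle back into $E_A$, contradiction. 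The same Case~2 subtlety involving the vertex $zero$ and constant-weight edges $c_1, c_2$ arises because $\mathcal{S}$ and $\mathcal{S}'$ may have initialized registers (\daddfree, not \AU), so I would import the interval-removal trick from Theorem~\ref{thm:addfree}: for each of the (at most $k+k'$) distinct initial register constants, refine the product automaton with $\mathcal{Q}$ states tracking which value in the relevant bounded interval a register currently holds, paying one exponential in the state count per constant eliminated. After $O(k+k')$ rounds the bound is double-exponential in the input size (and, for fixed register counts, polynomial, giving the {\sf NP}/{\sf coNP}-style corollary analogue). The resulting algorithm: nondeterministically guess the two contracted trails of bounded length, build $\cg(T,T')$, and verify in polynomial time that it has no negative-weight cycle and that it encodes either an $\mathcal{S}'$-stuck configuration or an output-disagreement step --- placing inclusion's complement, hence inclusion, in {\sf 2NEXPTIME} (closing under complement within the deterministic-space/alternating hierarchy as appropriate, or directly since the shortest witness is doubly-exponential and guess-and-check is polynomial in its size). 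Reactivity then follows from Lemma~\ref{thm:reactivity-inclusion}, which reduces reactivity of a \daddfree\ to an inclusion instance whose size is linear in the original; note the padding construction there adds no \addv\ operations and keeps determinism, so it stays within \daddfree.

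The main obstacle I expect is Case~2 of the contraction argument in the \emph{two-transducer, initialized-register} setting. With two independent sets of initialized registers, the vertex $zero$ may be linked by several weighted edges whose constants come from either $\mathcal{S}$ or $\mathcal{S}'$, and the chain-of-inequalities bookkeeping (from $S_2$: $z_c \ge r_c \ge -c_1$; from $S_3$: $c_2 \ge r_b \ge z_b$; with $c_1 + c_2 \le 0$) that forces the coexisting register ordering $r_c \ge r_b$ to clash with the ordering $r_b \ge r_c$ demanded by $S_1$ must be re-derived carefully when $r_b$ and $r_c$ may live in \emph{different} transducers --- one must check that the product automaton's ordering component really does encode this cross-transducer comparison, which is exactly what the register-pair ordering array in the product state is designed to carry. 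A secondary subtlety is handling the ``$\mathcal{S}'$ gets stuck'' witness uniformly with the ``outputs disagree'' witness; I would unify them by adding, as in Lemma~\ref{thm:reactivity-inclusion}'s proof, a sink state to $\mathcal{S}'$ emitting a fresh dummy label, so that every input has a full $\mathcal{S}'$-run and non-inclusion always manifests as an output-label or output-data disagreement at some position --- reducing both cases to the single ``disagreement step'' condition on $\cg(T,T')$.
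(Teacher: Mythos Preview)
Your proposal is correct and follows essentially the same route as the paper: reduce non-inclusion to a pair of witness trails (one per transducer), encode feasibility as absence of a negative-weight cycle in a constraint graph, bound the shortest witness via the trail-contraction pumping argument of Theorems~\ref{thm:ad}--\ref{thm:addfree} (with the interval-removal trick for initialized registers), and then invoke Lemma~\ref{thm:reactivity-inclusion} for reactivity. If anything, your treatment is more careful than the paper's sketch: the paper characterizes the witness only as ``$\rho$ can be extended but $\rho'$ cannot,'' whereas you explicitly separate---and then unify via a sink-state reduction---the two failure modes of $\mathcal{S}'$ (getting stuck versus emitting a different output), and you spell out that the product automaton must track cross-transducer register orderings so the Case~2 chain-of-inequalities argument goes through when $r_b$ and $r_c$ live in different machines.
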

\begin{proof}
To show an \daddfree~$\mathcal{S}$ is not included in another \daddfree~ $\mathcal{S'}$, it suffices to show an input data word $s$, over which there are two deterministic run: $\rho$ for $\mathcal{S}$ and $\rho'$ for $\mathcal{S'}$.
The two runs end at configurations $(q, R)$ and $(q', R')$, respectively, such that $(q, R)$ still has possible transition steps from it, but there is no more transition step is available from $(q', R')$. In other words, $\rho$ is a prefix of another run but $\rho'$ is not a prefix of any other run.
By Lemma~\ref{thm:trail}, the corresponding trails, $\tilde{\rho}$ is also a prefix of another $\mathcal{S}$-trail, but $\tilde{\rho'}$ is not a prefix of any other $\mathcal{S}$-trail. Let the two runs of the trails be of the form
\begin{align*}
\textrm{run for } \tilde{\rho}:~ w_0 \xrightarrow{T[0]} w_1 \xrightarrow{T[1]} w_2 \dots \xrightarrow{T[n-1]} w_n \\
\textrm{run for } \tilde{\rho'}:~ w'_0 \xrightarrow{T'[0]} w'_1 \xrightarrow{T'[1]} w'_2 \dots \xrightarrow{T'[n-1]} w'_n
\end{align*}
Similar to the construction of constraints we constructed for checking functionality (see the beginning of this section), we can build a set of constraints as the desired properties of the $n$ input data values. They guarantee the two runs $\rho$ and $\rho'$ of length $n$ can be constructed and satisfy the desired properties: $\rho$ can be extended with some input and $\rho'$ cannot be extended with any further input. Moreover, these constraints can be similarly solved by building a constraint graph and checking the absence of negative-wight cycle (see Lemma~\ref{thm:graph}).

Now checking inclusion is reduced to the search of two witness trails. Similar to the proof of Theorem~\ref{thm:ad}, we can show that the length of the shortest witness trail, if any, is bounded by $(|Q| \cdot \mathcal{B}_k)^2$ if the data domain is dense. 
Then similar to the proof of Theorem~\ref{thm:addfree}, we can use the same techniques to handle \daddfree~ in general, with an exponential blowup. Therefore the algorithm has the same complexity {\sf 2NEXPTIME}.

By Lemma~\ref{thm:reactivity-inclusion}, the reactivity problem can be reduced to the inclusion problem in polynomial time and also solved in {\sf 2NEXPTIME}.
\end{proof}

\paragraph{Remark:} The determinism is a critical assumption for Theorem~\ref{thm:daddfree}. Otherwise, given an input data word, there are many possible runs and trails and the no-extension property cannot be determined by checking a single or a fix number of witness trails. We leave the reactivity and inclusion problems of \addfree~ (and also its nondeterministic subclasses \AD~ and \AU) as open problems for future work.

\section{Conclusion}

We propose streaming register transducer as a natural machine model for implementations of transformations of infinite ordered-data words. This model assumes a linear group as the underlying data domain whose values are stored in a fixed number of registers. It supports nondeterministic transitions with linear-order comparison between and additive updates to registers using the input data value.
We investigate several subclasses of \SRT: the transitions are deterministic, the additive updates are disallowed, the registers are uninitialized, or the data domain is dense.
We show \SRT~ and its subclasses are strictly less expressive than \mso~ and not comparable with \fo.
We also investigate several decision problems of \SRT, including functionality, reactivity and inclusion. We prove the undecidability of these problems for \SRT. We also prove the functionality for add-free \SRT~ and the reactivity and inclusion for deterministic add-free \SRT~ are decidable. We leave precise logical characterization of \SRT~ and decidability of reactivity/inclusion for nondeterministic add-free \SRT~ as open problems.


\bibliography{refs}


\end{document}